\newtheorem{remark}{\textbf{Remark}}
\newtheorem{corollary}{\textbf{Corollary}}
\newtheorem{theorem}{\textbf{Theorem}}
\newtheorem{lemma}{\textbf{Lemma}}
\begin{document}
	
	\title{Directional Cell Search Delay Analysis for Cellular Networks with Static Users}
	
	\author{ Yingzhe Li, Fran\c{c}ois Baccelli, Jeffrey G. Andrews, Jianzhong Charlie Zhang \thanks{Y. Li, J. G. Andrews and F. Baccelli are with the Wireless Networking and Communications Group (WNCG), The University of Texas at Austin (email: yzli@utexas.edu, francois.baccelli@austin.utexas.edu, jandrews@ece.utexas.edu). J. Zhang is with Samsung Research America-Dallas (email: jianzhong.z@samsung.com). Date revised: \today.}}
	
	\maketitle

\begin{abstract}
	 Cell search is the process for a user to detect its neighboring base stations (BSs) and make a cell selection decision. Due to the importance of beamforming gain in millimeter wave (mmWave) and massive MIMO cellular networks, the directional cell search delay performance is investigated. A cellular network with fixed BS and user locations is considered, so that strong temporal correlations exist for the SINR experienced at each BS and user. For Poisson cellular networks with Rayleigh fading channels, a closed-form expression for the spatially averaged mean cell search delay of all users is derived. This mean cell search delay for a noise-limited network (e.g., mmWave network) is proved to be infinite whenever the non-line-of-sight (NLOS) path loss exponent is larger than 2. For interference-limited networks, a phase transition for the mean cell search delay is shown to exist in terms of the number of BS antennas/beams $M$: the mean cell search delay is infinite when $M$ is smaller than a threshold and finite otherwise. Beam-sweeping is also demonstrated to be effective in decreasing the cell search delay, especially for the cell edge users.
\end{abstract}

\section{Introduction}
Cell search is a critical prerequisite to establish an initial connection between a cellular user and the cellular network. Specifically, the users will detect their neighboring BSs and make the cell selection decision during a downlink cell search phase, after which the users can acquire connections with the network by initiating an uplink random access phase. The transmissions and receptions during cell search are performed omni-directionally in LTE~\cite{dahlman2013LTEbook}, but this is unsuitable for mmWave communication~\cite{pi2011introduction,rappaport2013millimeter,roh2014millimeter,ghosh2014millimeter} or massive MIMO~\cite{Marzetta2010noncooperative,Larsson2014Massive,Rusek2013Scaling,Bjornson2016massive} due to the lack of enough directivity gain. By contrast, directional cell search schemes that leverage BS and/or user beam-sweeping to achieve extra directivity gains, can ensure reasonable cell search performance~\cite{andrews2014will,andrews2017modeling,barati2015directional,li2016design,Bjornson2016massive,Larsson2014operation,shepard2015control}. In this paper, we leverage stochastic geometry~\cite{baccelli2010stochasticpt2,trac,stochtutorial} to develop an analytical framework for the directional cell search delay performance of a \textit{fixed} cellular network, where the BS and user locations are fixed over a long period of time (e.g., more than several minutes). We believe the analytical tools developed in this paper can provide useful insights into practical fixed cellular networks such as fixed mmWave or massive MIMO broadband networks~\cite{Larsson2014Massive,Verizon20165G2,Pi2016Millimeter}, or mmWave backhauling networks~\cite{pi2011introduction,Hur2013millimeter}.


\subsection{Related Work}
Beam-sweeping is a useful method to improve cell search performance compared to conventional omni-directional cell search for both mmWave and massive MIMO networks. Specifically, mmWave links generally require high directionality with large antenna gains to overcome the high isotropic path loss of mmWave propagation. As a result, in mmWave networks, applying beam-sweeping for cell search not only provides sufficient signal-to-noise ratio (SNR) to create viable communications, but also facilitates beam alignment between the BS and users~\cite{andrews2017modeling,barati2015directional,giordani2016initial,Giordani2016comparative,li2016design,li2016performance}. The directional cell search delay performance of mmWave systems has been investigated in~\cite{barati2015directional,giordani2016initial,Giordani2016comparative} from a link level perspective, and in~\cite{li2016design,li2016performance} from a system level perspective. In particular,~\cite{li2016design} and \cite{li2016performance} consider the user and mmWave BS locations are fixed within an initial access cycle, but independently reshuffled across cycles. This block coherent scenario is fundamentally different from that of a fixed network. For a massive MIMO system, the BSs can achieve an effective power gain that scales with the number of antennas if the channel state information (CSI) is known at the BSs~\cite{Bjornson2016massive}. However, since such an array gain is unavailable for cell search operations due to the lack of CSI, the new users may be unable to join the system using the traditional omnidirectional cell search~\cite{Bjornson2016massive,Larsson2014operation,shepard2015control}. In order to overcome this issue,~\cite{shepard2015control} has proposed open-loop beamforming to exhaustively sweep through BS beams for cell search. This design has been implemented and verified on a sub-6 GHz massive MIMO prototype~\cite{shepard2015control}, but the analytical directional cell search performance has not been investigated for fixed cellular networks from a system level perspective. 

Due its analytical tractability for cellular networks~\cite{baccelli2010stochasticpt2,trac,stochtutorial}, stochastic geometry is a natural candidate for analyzing the directional cell search delay in such fixed cellular networks. In particular, stochastic geometry has already been widely used to investigate fixed Poisson network performance through the local delay metric~\cite{baccelli2010new,baccelli2010stochasticpt2,Haenggi2013local,Zhang2012delay,Iyer2015may}, which characterizes the number of time slots needed for the SINR to exceed a certain SINR level. In~\cite{baccelli2010new,baccelli2010stochasticpt2}, the local delay for fixed ad hoc networks was found to be infinite under several standard scenarios such as Rayleigh fading with constant noise. A new phase transition was identified for the interference-limited case in terms of the mean local delay: the latter is finite when certain parameters are above a threshold, and infinite otherwise. The local delay for noise-limited and interference-limited fixed Poisson networks was also investigated in~\cite{Haenggi2013local,Zhang2012delay,Iyer2015may}, where it is shown that power control is an efficient method to ensure a finite mean local delay. 
These previous works mainly focused on omni-directional communications.

\subsection{Contributions}
In this work, we analyze the cell search delay in \textit{fixed} cellular networks with a directional cell search protocol. We consider a time-division duplex (TDD) cellular system, where system time is divided into different initial access (IA) cycles. Each cycle starts with the cell search period, wherein BSs apply a synchronous beam-sweeping pattern to broadcast the synchronization signals. A mathematical framework is developed to derive the exact expression for the mean cell search delay, which quantifies the spatial average of the individual mean cell search delays perceived by all users.  
The main contributions of this paper are summarized as follows.

\textbf{Beam-sweeping is shown to reduce the number of IA cycles needed to succeed in cell search.} For any arbitrary BS locations and fading distribution, the mean number of initial access cycles required to succeed in cell search is proved to be decreasing when the number of BS antennas/beams is multiplied by a factor $m > 1$.

\textbf{An exact expression for the mean cell search delay is derived for Poisson point process (PPP) distributed BSs and Rayleigh fading channels.} This expression is given by an infinite series, based on which the following observations are obtained:
		\begin{itemize}
			\item Under the noise limited scenario (e.g., mmWave networks), we prove that as long as the path loss exponent for NLOS path is larger than 2, the mean cell search delay is infinite, irrespective of the BS transmit power and the BS antenna/beam number. 
			\item Under the interference limited scenario (e.g., massive MIMO networks in sub-6 GHz bands), there exists a phase transition for mean cell search delay in terms of the BS antenna/beam number $M$. Specifically, the mean cell search delay is infinite when $M$ is smaller than a critical value and finite otherwise. This fact was never observed in the literature to the best of our knowledge. 
		\end{itemize}
		
\textbf{Cell search delay distribution is numerically evaluated.} The conditional mean cell search delay of a typical user given its nearest BS distance is derived for PPP distributed BSs and Rayleigh fading channels. The distribution of this conditional mean cell search delay is also numerically evaluated, and we observe that the cell search delay distribution is heavy-tailed. We also show that increasing the number of BS antennas/beams can significantly reduce the cell search delay for cell edge users. 	

Overall, this paper has shown that in fixed networks the mean cell search delay could be very large due to the temporal correlations induced by common randomness. As a result, for fixed cellular networks, system parameters including the number of BS antennas and/or BS intensity need to be carefully designed for reasonable cell search delay performance to be achieved.

\section{System Model}\label{SysModelSec}
In this work, we consider a cellular system that has carrier frequency $f_c$ and total system bandwidth $W$. The BS transmit power is denoted by $P_b$, and the total thermal noise power is denoted by $\sigma^2$. In the rest of this section, we present the proposed directional cell search protocol, location models, propagation assumptions, and the performance metrics. 
\subsection{Directional Cell Search Protocol}\label{IAProtSubsec}
We consider a TDD cellular system as shown in Fig.~\ref{fig:time_Structure}, where system time is divided into different initial access cycles with period $T$, and where $\tau$ denotes the OFDM symbol period. Initial access refers to the procedures that establish an initial connection between a user and the cellular network. It consists of two main steps: cell search on the downlink and random access (RA) on the uplink. Specifically, by detecting the synchronization signals broadcasted by BSs during cell search, a user can determine the presence of its neighboring BSs and make the cell selection decision. Then the user can initiate the random access process to its desired serving BS by transmitting a RA preamble through the shared random access channel, and it is successfully connected to the network if the BS can decode the RA preamble without any collision. The main focus of this work is the cell search performance, while the random access performance will be incorporated in our future work. 


Each BS is equipped with a large dimensional antenna array with $M$ to support highly directional communications. For analytical tractability, the actual antenna pattern is approximated by a sectorized beam pattern, where the antenna gain is constant within the main lobe. In addition, we assume a $0$ side lobe gain for the BS, which is a reasonable approximation because the BS uses a large dimensional antenna array with narrow beams, possibly with a front-to-back ratio larger than 30 dB~\cite{waterhouse2002broadband}. 
Each BS supports analog beamforming with a maximum of $M$ possible BF vectors, where the $m$-th ($1 \leq m \leq M$) beamforming (BF) vector corresponds to the main-lobe, which has antenna gain $M$, and covers a sector area with angle $ [2\pi \frac{m-1 }{M}, 2\pi\frac{m}{M})$~\cite{alkhateeb2016initial}. Each user is assumed to have a single omni-directional antenna with unit antenna gain~\cite{shepard2015control,hussain2017throughput}. 


In the cell search phase, each BS sweeps through all $M$ transmit beamforming directions to broadcast the synchronization signals, and each user is able to detect a BS with sufficiently small miss detection probability (such as 1\%) if the signal-to-interference-plus-noise ratio (SINR) of the synchronization signal from that BS exceeds $\Gamma_{cs}$. 
All BSs transmit synchronously using the same beam direction during every symbol, and the cell search delay within each IA cycle is therefore $T_{cs} = M \times \tau$. When every BS transmits using the $m$-th ($1 \leq m \leq M$) BF direction, the typical user can only receive from the BSs located inside the ``BS sector"
\begin{align}
S\left(o,\frac{2\pi (m-1)}{M} + \pi, \frac{2\pi m}{M} + \pi\right),
\end{align} 
where we define the infinite sector domain centered at $u \in \mathbb{R}^2$ by:  
\begin{align}
S(u,\theta_1,\theta_2) = \{ x \in \mathbb{R}^2, \text{ s.t., } \angle(x - u) \in [\theta_1, \theta_2) \}.
\end{align}
There are $M$ such non-overlapping BS sectors during cell search, with the $j$-th ($ 1 \leq j \leq M$) sector being $S(o,\frac{2\pi (j-1)}{M}, \frac{2\pi j}{M})$.
We say a BS sector is detected during cell search if the typical user is able to detect the BS that provides the smallest path loss (i.e., the closest BS) inside this sector, where the path loss can be estimated from the beam reference signals~\cite{Verizon20165G2}. After cell search, the typical user selects the BS with the smallest path loss among all the detected BS sectors as its serving BS. For simplicity, we neglect the scenario where the BS providing the smallest path loss inside a BS sector is in deep fade and unable to be detected, while some other BSs can be detected in the same sector. Such a scenario does not change the fundamental trends regarding the finiteness of the mean cell search that will be detailed in Section~\ref{MeanCSDelaySec} (e.g. Theorem~\ref{NoiseLimLemma1}), but the corresponding analysis is significantly more complicated. 


\begin{figure}[h]
	\centering
	\includegraphics[width=0.8\linewidth]{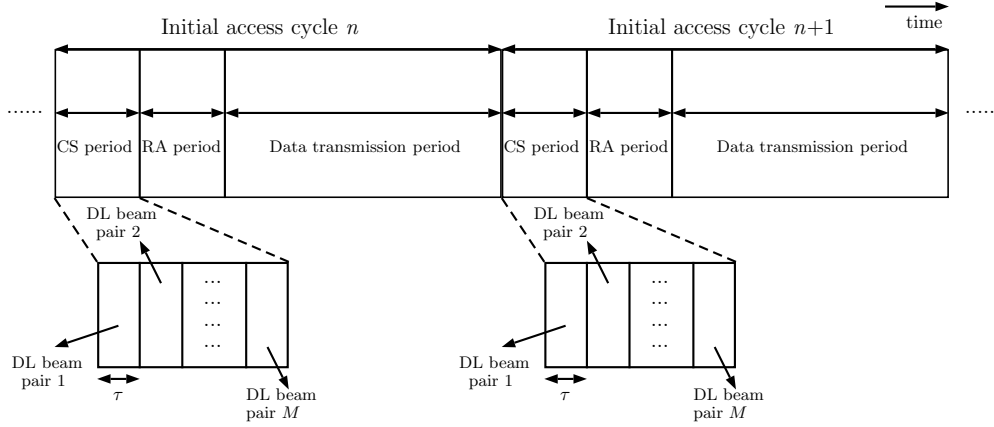}
	\caption{Illustration of two initial access cycles and the timing structure.}
	\label{fig:time_Structure}
\end{figure}


\begin{table}[t]
	\center\caption[Notation and Simulation Parameters]{Notation and Simulation Parameters}\label{SysParaTable}
	\resizebox{460pt}{!}{%
		\begin{tabular}{|c|p{130mm}|p{60mm}|}
			\hline 
			Symbol & Definition & Simulation Value \\ 
			\hline 
			$\Phi$, $\lambda$ & BS PPP and intensity & $\lambda = 100$ BS/km$^2$ \\ 
			\hline 
			$\Phi_u$, $\lambda_u$ & User PPP and intensity & $\lambda_u = 1000$ users/km$^2$ \\ 
			\hline
			$P_b$, $P_u$ & BS and user transmit power & $P_b = 30$ dBm, $P_u = 23$ dBm \\ 
			\hline
			$f_c, B$ & Carrier frequency and system bandwidth &  ($f_c, B$) = ($73$, $1$) GHz, ($2, 0.2$) GHz \\
			\hline 
			$W$ & Total thermal noise power& $-174 \text{   dBm/Hz} + 10 \log_{10}(B)$\\
			\hline
			$M$ & Number of BS antennas and BF directions supported at each BS & \\
			\hline 
			$\alpha_L, \alpha_N$ & Path loss exponents for dual-slope model & $(\alpha_L, \alpha_N) = (2.1, 3.3), (2.5,2.5)$\\ 
			\hline 
			$C_L, C_N$ & Path loss at close-in reference distance for dual-slope model & $(C_L, C_N) = (69.71, 69.71)$ dB, $(38.46, 38.46)$ dB \\ 
			\hline
			$R_c$ & Critical distance for dual-slope path loss model & 50m \\
			\hline 
			$\Gamma_{cs}, \Gamma_{ra}$ & SINR threshold to detect synchronization signal and RA preamble & $(\Gamma_{cs}, \Gamma_{ra}) = (-4,-4)$  dB \\ 
			\hline
			$\tau$ & OFDM symbol period & 14.3 $\mu$s, 71.4 $\mu$s  \\
			\hline
			$T$ & Initial access cycle period & 20 ms, 100 ms \\
			\hline
			$S_M(i)$ & $i$-th BS sector, i.e., $S_M(i) = \{ x \in \mathbb{R}^2, \text{ s.t., } \angle x \in [2\pi\frac{(i-1)}{M}, 2\pi \frac{i}{M}) \}$ & \\ 
			\hline
			$\{x_0^i\}_{i=1}^{M}$ & BS providing the smallest path loss to the typical user inside $S_M(i)$& \\
			\hline 
			$R_0$ & Distance from typical user to its nearest BS & \\
			\hline
			$L_{cs}(M,\lambda)$ & Number of IA cycles to succeed in cell search & \\
			\hline
			$L_{cs}(R_0.M,\lambda)$ & Mean number of IA cycles to succeed in cell search conditionally on $R_0$& \\
			\hline
			$D_{cs}(M,\lambda)$ & Cell search delay &\\
			\hline
			$B(x,r)$ ($B^{o}(x,r)$) & Closed (open) ball with center $x$ and radius $r$ & \\
			\hline
		\end{tabular} }
	\end{table}

\subsection{Spatial Locations and Propagation Models} The BS locations are assumed to be a realization of a stationary point process $\Phi = \{x_i\}_i$ with intensity $\lambda$. The user locations are modeled as a realization of a homogeneous PPP with intensity  $\lambda_u$, which is denoted by $\Phi_u = \{u_i\}_i$. 
In this paper, a fixed network scenario is investigated  where the BS locations are fixed, and the users are either fixed or move with very slow speed such as a pedestrian speed (e.g., less than $1$ km/h) . As a result, the BS and user locations appear to be fixed across different initial access cycles. This is fundamentally different from the high mobility scenario investigated in~\cite{li2016design,li2016performance}, which assumes the BS and user PPPs are independently re-shuffled across every initial access cycles. 

Without loss of generality, we can analyze the performance of a typical user $u_0$ located at the origin. This is guaranteed by Slivnyak's theorem, which states that the property observed by the typical point of a PPP $\Phi^{'}$ is the same as that observed by the point at origin in the process $\Phi^{'} \cup \{o\}$~\cite{chiu2013stochastic,baccelli2010stochastic}. 

A dual-slope, non-decreasing path loss function~\cite{zhang2015downlink} is adopted, where the path loss for a link with distance $r$ is given by: 
\allowdisplaybreaks
\begin{align}
\allowdisplaybreaks
&\mathit{l}(r)=
\left\{
\begin{array}{ll}
C_L r^{\alpha_L}, & \text{if  } r < R_C,\\
C_N r^{ \alpha_N}, & \text{if  } r \geq R_C.\\
\end{array}
\emph{ } \right.
\label{DualSlopePLEq}
\end{align}
The dual slope path loss model captures the dependency of the path loss exponent on the link distance for various network scenarios, such as ultra-dense~\cite{zhang2015downlink} and mmWave networks~\cite{bai2015coverage}. In particular,~(\ref{DualSlopePLEq}) is referred to as the LOS ball blockage model for mmWave networks~\cite{zhang2015downlink}, wherein $\alpha_L$ and $\alpha_N$ represent the LOS and NLOS path loss exponents, and $C_L$ and $C_N$ represent the path loss at a close-in reference distance (e.g., 1 meter). We focus on the scenario where $\alpha_N \geq \max(\alpha_L,2)$. If $\alpha_L = \alpha_N = \alpha$ and $C_L = C_N = C$, the dual slope path loss model reverts to the standard single-slope path loss model. 

Due to the adopted antenna pattern for BSs, the directivity gain between BS and user is $M$ when the BS beam is aligned with the user, and $0$ otherwise. The fading effect for every BS-user link is modeled by an i.i.d. random variable, whose complementary cumulative distribution function (CCDF) is a decreasing function $G(\cdot)$ with support $[0,\infty)$. In addition, we assume the IA cycle length is such that the fading random variables for a given link are also i.i.d. across different cycles. 


\subsection{Performance Metrics}
The main performance metrics investigated in this work are the number of IA cycles, and the corresponding cell search delay for the typical user to discover its neighboring BSs and determine a potential serving BS. 
Without loss of generality, the IA cycle $1$ in Fig.~\ref{fig:time_Structure} represents the first IA cycle of the typical user. Denote by $e_M(n)$ 
the success indicator for cell search of IA cycle $n$. The number of IA cycles for the typical user to succeed in cell search is therefore:
\begin{align}
L_{cs}(M,\lambda) = \inf \{n \geq 1: e_M(n) = 1\}.
\end{align}Since analog beamforming is adopted at each BS, the cell search delay is defined as follows:
\begin{align}\label{CSandIADelayDefnEq}
D_{cs}(M,\lambda) &= (L_{cs}(M,\lambda) -1)  T + M \tau.
\end{align}

Finally, Table~\ref{SysParaTable} summarizes the notation, the definitions and the system parameters that will be used in the rest of this paper\footnote{For the symbols with two simulation values, the first one is for the noise limited scenario, and the second one is for the interference limited scenario, which will be detailed in Section~\ref{NumEval}.}.

\section{Analysis for Mean Cell Search Delay}\label{MeanCSDelaySec}
In this section, the mean cell search delay performance for the typical user is investigated, which corresponds to the cell search delay under the Palm expectation with respect to the user PPP $\Phi_u $ (i.e., $\mathbb{E}_{\Phi_u}^0[D_{cs}(M,\lambda)]$). In fact, the Palm expectation can also be understood from its ergodic interpretation, which states that for any user $u \in \Phi_u$ with cell search delay $D_{cs}(u,M,\lambda)$, the following relation is true: 
\begin{align}
\mathbb{E}_{\Phi_u}^0[D_{cs}(M,\lambda)] = \lim_{n\rightarrow \infty} \frac{1}{\Phi_u(B(0,n))} \sum_{k} \textbf{1}_{u \in B(0,n)} D_{cs}(u,M,\lambda).
\end{align}
Therefore, the mean cell search delay of the typical user can also be understood as the spatial average of the individual cell search delays among all the users. 
For notational simplicity, we will use $\mathbb{E}$ in the rest of this paper to denote the Palm expectation under the user PPP $\Phi_u$.

\subsection{Cell Search Delay Under General BS Deployment and Fading Assumptions}
In this part, we first investigate the cell search delay under a general BS location model (not necessarily PPP) and fading distribution. According to Section~\ref{SysModelSec}, the BS and user locations are fixed, and the fading variables for every link are i.i.d. across IA cycles. Therefore, given the BS process $\Phi$, the cell search success indicators for different IA cycles $\{e_M(n)\}$ form an i.i.d. Bernoulli sequence of random variables. The cell search success probability is denoted by $\pi_{M}(\Phi) = \mathbb{E}\left[e_M(1) | \Phi\right]$.

Since each BS sector can be independently detected given $\Phi$, and cell search is successful if at least one BS sector is detected. Conditionally on $\Phi$, the cell search success probability for every IA cycle is therefore: 
\begin{align}
\pi_{M}(\Phi) = 1 - \prod_{i = 1}^{M} \left[ 1- \mathbb{E}\left[\hat{e}_{M}(i) | \Phi\right]\right],
\end{align}
where $\hat{e}_{M}(i)$ denotes the indicator that the BS providing the smallest path loss inside BS sector $i$ is detected. Specifically, if we denote by $S_M(i) \triangleq S(o,\frac{2\pi (i-1)}{M}, \frac{2\pi i}{M})$ the BS sector $i$, $x_0^{i}$ the BS providing the smallest path loss to the typical user in $\Phi \cap S_M(i)$, and by $\{F_j^i\}$ the fading random variables from BSs in $S_M(i)$ to the typical user, we have:
\begin{align}
\mathbb{E}\left[\hat{e}_{M}(i) | \Phi\right] &= \mathbb{P}\biggl(\frac{F_{0}^{i}/\mathit{l}(\|x_0^i\|)}{\sum\limits_{x_j^{i} \in \Phi \cap S_M(i) \setminus \{x_0^i\}}\mathit{F}_{j}^{i}/\mathit{l}(\|x_j^{i}\|) + W/PM} > \Gamma_{cs} \biggl| \Phi\biggl)\nonumber \\
&=\mathbb{E}\biggl[G\biggl(\Gamma_{cs} \mathit{l}(\|x_0^i\|) (\sum\limits_{x_j^{i} \in \Phi \cap S_M(i) \setminus \{x_0^i\}}\mathit{F}_{j}^{i}/\mathit{l}(\|x_j^{i}\|) + W/PM)\biggl)\biggl|\Phi\biggl]\label{SPGeneralEq2},
\end{align}
where the expectation in~(\ref{SPGeneralEq2}) is taken with respect to the i.i.d. fading random variables $\{F_j^i\}$. 

In the following theorem, we derive the mean number of IA cycles for the typical user to succeed in the cell search under the Palm expectation of the user process. 
\begin{theorem}\label{IADelayThm1}
	The mean number of IA cycles needed for the typical user to succeed in cell search is given by:
	\begin{align}
	\allowdisplaybreaks
	&\mathbb{E}[L_{cs}(M,\lambda) | \Phi] = \frac{1}{1 - \prod_{i = 1}^{M} \left[ 1- \mathbb{E}\left[\hat{e}_{M}(i) | \Phi\right]\right]},\label{IADelayCondEq1}\\
	&\mathbb{E}[L_{cs}(M,\lambda)] = \mathbb{E}_{\Phi}\left[\frac{1}{1 - \prod_{i = 1}^{M} \left[ 1- \mathbb{E}\left[\hat{e}_{M}(i) | \Phi\right]\right]}\right]\label{IADelayCondEq2}.
	\end{align}
\end{theorem}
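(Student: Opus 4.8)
The plan is to exploit the conditional independence structure that has already been established in the text immediately preceding the statement: given the BS point process $\Phi$, the cell search success indicators $\{e_M(n)\}_{n \geq 1}$ form an i.i.d. Bernoulli sequence with common success probability $\pi_M(\Phi) = 1 - \prod_{i=1}^M [1 - \mathbb{E}[\hat{e}_M(i)|\Phi]]$. Since $L_{cs}(M,\lambda)$ is defined as the index of the first successful cycle, conditionally on $\Phi$ it is simply a geometric random variable with parameter $\pi_M(\Phi)$. The whole theorem then reduces to the elementary mean of a geometric variable, followed by an application of the tower property.

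Concretely, first I would write, conditionally on $\Phi$,
\begin{align}
\mathbb{E}[L_{cs}(M,\lambda) | \Phi] = \sum_{n=1}^{\infty} n \, (1-\pi_M(\Phi))^{n-1} \, \pi_M(\Phi) = \frac{1}{\pi_M(\Phi)},
\end{align}
where the last equality uses the standard summation $\sum_{n \geq 1} n q^{n-1} = (1-q)^{-2}$ with $q = 1 - \pi_M(\Phi)$. Substituting the success-probability expression $\pi_M(\Phi) = 1 - \prod_{i=1}^M [1 - \mathbb{E}[\hat{e}_M(i)|\Phi]]$ yields~(\ref{IADelayCondEq1}). The unconditional mean~(\ref{IADelayCondEq2}) then follows immediately from the tower property, taking the outer expectation with respect to the law of $\Phi$:
\begin{align}
\mathbb{E}[L_{cs}(M,\lambda)] = \mathbb{E}_{\Phi}\left[\mathbb{E}[L_{cs}(M,\lambda) \mid \Phi]\right] = \mathbb{E}_{\Phi}\left[\frac{1}{\pi_M(\Phi)}\right].
\end{align}

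The derivation itself is light and presents no real obstacle; the only point requiring care is the well-definedness of the conditional mean, i.e. that $\pi_M(\Phi) > 0$ almost surely, which holds because with positive probability the nearest BS of at least one sector is detected, so that $1/\pi_M(\Phi)$ is finite for almost every realization of $\Phi$ and the geometric summation converges. The genuinely interesting behavior, which the paper exploits in later sections, is that although the inner conditional mean is a.s. finite, the outer expectation $\mathbb{E}_{\Phi}[1/\pi_M(\Phi)]$ in~(\ref{IADelayCondEq2}) may diverge, since realizations with atypically small $\pi_M(\Phi)$ (for instance a distant nearest BS combined with strong interference) are weighted heavily. Characterizing precisely when this divergence occurs is the content of the subsequent phase-transition results rather than of this theorem, whose role is only to reduce the delay analysis to the study of the single functional $\mathbb{E}_{\Phi}[1/\pi_M(\Phi)]$.
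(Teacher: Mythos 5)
Your proposal is correct and matches the paper's own proof: the paper likewise observes that, conditionally on $\Phi$, $L_{cs}(M,\lambda)$ is geometric with success probability $\pi_M(\Phi)$, giving the conditional mean $1/\pi_M(\Phi)$, and then obtains~(\ref{IADelayCondEq2}) by taking the expectation over $\Phi$. Your added remark that $\pi_M(\Phi) > 0$ almost surely (so the conditional mean is well defined) is also present in the paper, stated just after the theorem via $\mathbb{E}[\hat{e}_M(i)\,|\,\Phi] > 0$ from~(\ref{SPGeneralEq2}).
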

\begin{proof}
	The first part can be proved by the fact that given $\Phi$, $L_{cs}(M,\lambda)$ has a geometric distribution with success probability  $\pi_{M}(\Phi)$; while the second part follows by taking the expectation of~(\ref{IADelayCondEq1}) with respect to $\Phi$.
\end{proof}

\begin{remark}
	Since $\mathbb{E}\left[\hat{e}_{M}(i)|\Phi\right] > 0$ according to~(\ref{SPGeneralEq2}), the conditional mean cell search delay $\mathbb{E}[L_{cs}(M,\lambda) | \Phi]$ will be finite almost surely. However, the overall spatial averaged mean cell search delay with respect to (w.r.t.) the BS PPP $\Phi$ (i.e. $\mathbb{E}[L_{cs}(M,\lambda)]$) could be infinite under certain network settings. This will be detailed in the next subsection.
\end{remark}

A lower bound and an upper bound to $\mathbb{E}[L_{cs}(M,\lambda)]$ can be immediately obtained from~(\ref{IADelayCondEq2}), which are provided in the following remarks. 
\begin{remark}\label{UnivLBRmk}
	By applying Jensen's inequality to the positive random variable $X$ and the function $f(x) = \frac{1}{x}$, we get that $\mathbb{E}[\frac{1}{X}] \geq \frac{1}{\mathbb{E}[X]}$. Thus
	\begin{align}\label{GeneralUB}
	\mathbb{E}[L_{cs}(M,\lambda)] \geq \frac{1}{1 - \mathbb{E}\left[\prod_{i = 1}^{M} \left[ 1- \mathbb{E}\left[\hat{e}_{M}(i)|\Phi\right]\right]\right]},
	\end{align}
	where the equality holds when the BS PPP is independently re-shuffled across different IA cycles from the typical user's perspective, which coincides with the high mobility scenario considered in~\cite{li2016design,li2016performance}. 
\end{remark}
	

\begin{remark}\label{UnivUBRmk}
	If we denote by $x_0$ the BS providing the smallest path loss to the typical user, and $i^{*}$ the index for the BS sector that contains $x_0$, then $\prod_{j = 1}^{M} \left[ 1- \mathbb{E}\left[\hat{e}_{M}(j) | \Phi\right]\right] \leq 1- \mathbb{E}\left[\hat{e}_{M}(i^*) | \Phi\right]$. Therefore, an upper bound to $\mathbb{E}[L_{cs}(M,\lambda)]$ is given by:
	\begin{align}\label{UnivUBEq}
		\mathbb{E}[L_{cs}(M,\lambda)] \leq \mathbb{E}\biggl[\frac{1}{\mathbb{E}\left[\hat{e}_{M}(i^*)|\Phi\right]}\biggl].
	\end{align}
\end{remark}

Based on Theorem~\ref{IADelayThm1}, we can prove the following relation between the BS antenna/beam number $M$ and the mean cell search delay. 
\begin{lemma}\label{BeamSweepGeneralLemma}
	Given a realization of BS locations $\Phi$, the mean number of IA cycles to succeed in cell search is such that $\mathbb{E}[L_{cs}(M_2,\lambda) | \Phi] < \mathbb{E}[L_{cs}(M_1,\lambda) | \Phi]$, if $M_2 = m M_1$ with $m$ being an integer larger than 1.  
\end{lemma}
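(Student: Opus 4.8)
The plan is to reduce the statement to a monotonicity property of the per-cycle success probability. By~(\ref{IADelayCondEq1}) we have $\mathbb{E}[L_{cs}(M,\lambda)\,|\,\Phi] = 1/\pi_M(\Phi)$ with $\pi_M(\Phi) = 1 - \prod_{i=1}^{M}\bigl(1 - \mathbb{E}[\hat e_M(i)\,|\,\Phi]\bigr)$, so it is enough to prove the strict inequality $\pi_{M_2}(\Phi) > \pi_{M_1}(\Phi)$, equivalently
\[
\prod_{k=1}^{M_2}\bigl(1 - \mathbb{E}[\hat e_{M_2}(k)\,|\,\Phi]\bigr) \;<\; \prod_{i=1}^{M_1}\bigl(1 - \mathbb{E}[\hat e_{M_1}(i)\,|\,\Phi]\bigr).
\]
The structural fact I would exploit is that, since $m$ is an integer, the $M_2$-sector partition refines the $M_1$-sector partition: each sector $S_{M_1}(i)$ is the disjoint union of the $m$ consecutive sub-sectors $S_{M_2}(k)$, $k = (i-1)m+1,\dots,im$. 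I would therefore regroup the $M_2$-product according to the $M_1$-sectors and prove the inequality one group at a time.

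The core step is, for each $M_1$-sector $i$, the per-group bound
\[
\prod_{k=(i-1)m+1}^{im}\bigl(1 - \mathbb{E}[\hat e_{M_2}(k)\,|\,\Phi]\bigr) \;\le\; 1 - \mathbb{E}[\hat e_{M_1}(i)\,|\,\Phi].
\]
Let $x_0^i$ be the BS of smallest path loss in $S_{M_1}(i)$ and let $k^\ast$ be the unique sub-sector containing it; because $x_0^i$ is closest in the whole sector it remains the closest BS inside $S_{M_2}(k^\ast)$. Since each factor lies in $[0,1]$, discarding all factors but $k^\ast$ gives $\prod_k (1 - \mathbb{E}[\hat e_{M_2}(k)|\Phi]) \le 1 - \mathbb{E}[\hat e_{M_2}(k^\ast)|\Phi]$, so it remains to show $\mathbb{E}[\hat e_{M_2}(k^\ast)|\Phi] \ge \mathbb{E}[\hat e_{M_1}(i)|\Phi]$. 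I would prove this by coupling the two detection events with the same fading realization $\{F_j\}$ for every BS and comparing the SINRs in~(\ref{SPGeneralEq2}): the signal term $F_0/\mathit{l}(\|x_0^i\|)$ is identical in both, while the denominator can only shrink when passing from $M_1$ to $M_2$, because the interference sum runs over $\Phi \cap S_{M_2}(k^\ast) \subseteq \Phi \cap S_{M_1}(i)$ (all summands nonnegative) and the noise term obeys $W/(PM_2) \le W/(PM_1)$. Hence the level-$M_2$ SINR pointwise dominates the level-$M_1$ SINR, so $\hat e_{M_2}(k^\ast) \ge \hat e_{M_1}(i)$ for every fading outcome, and taking expectations over $\{F_j\}$ yields the claim.

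Multiplying the per-group bounds over $i = 1,\dots,M_1$ collapses the left-hand side to $\prod_{k=1}^{M_2}(1 - \mathbb{E}[\hat e_{M_2}(k)|\Phi])$ and delivers the weak form of the target inequality. To obtain the strict inequality asserted in the lemma, I would sharpen the coupling step for the sub-sector $k^\ast$: since $m > 1$ the noise term satisfies $W/(PM_2) < W/(PM_1)$ strictly, so the level-$M_2$ argument inside $G$ is strictly smaller; as $G$ is the CCDF of a variable with support $[0,\infty)$, it is strictly decreasing, whence $\mathbb{E}[\hat e_{M_2}(k^\ast)|\Phi] > \mathbb{E}[\hat e_{M_1}(i)|\Phi]$ for every $i$. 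Because every factor $1 - \mathbb{E}[\hat e_{M_1}(i)|\Phi]$ is strictly positive (the argument of $G$ is a.s. at least $\Gamma_{cs}\mathit{l}(\|x_0^i\|)W/(PM_1) > 0$, so $\mathbb{E}[\hat e_{M_1}(i)|\Phi] < 1$), the product of the strict per-group inequalities is again strict, giving $\pi_{M_2}(\Phi) > \pi_{M_1}(\Phi)$ and therefore $\mathbb{E}[L_{cs}(M_2,\lambda)|\Phi] < \mathbb{E}[L_{cs}(M_1,\lambda)|\Phi]$. I expect the main obstacle to be precisely this bookkeeping for strictness, together with the verification that the refinement preserves the ``closest BS'' in sub-sector $k^\ast$; the coupling itself is routine once the nested sector structure is set up.
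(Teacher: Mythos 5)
Your proof is correct and follows essentially the same route as the paper's: reduce to $\pi_{M_2}(\Phi) > \pi_{M_1}(\Phi)$ via Theorem~\ref{IADelayThm1}, regroup the $M_2$-product under the refinement $S_{M_1}(i) = \bigcup_{j=1}^{m} S_{M_2}((i-1)m+j)$, and compare the sub-sector containing $x_0^i$ against the parent sector using the monotonicity of $G$ in~(\ref{SPGeneralEq2}). The only minor difference is where strictness is harvested: the paper also invokes positivity of the detection probabilities of the other sub-sectors, whereas you discard those factors with a weak bound and get strictness from $W/(PM_2) < W/(PM_1)$ together with the positivity check on the factors $1-\mathbb{E}[\hat e_{M_1}(i)\,|\,\Phi]$ --- which is, if anything, slightly more careful, since it does not presume every sub-sector contains a BS (though, like the paper, it implicitly uses $W>0$ for strictness).
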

\begin{proof}
	Since $M_2 = m M_1$, we know that $S_{M_1}(i) = \bigcup_{j=1}^{m}S_{M_2}((i-1)m+j)$ for $1 \leq i \leq M_1$. Denote by $x_0^i$ the BS providing the smallest path loss to the typical user inside $\Phi \cap S_{M_1}(i)$, and assume $x_0^i \in \Phi \cap S_{M_2}((i-1)m+j_0)$ for some $j_0 \in [1,m]$. Due to the facts that $M_2 > M_1$, $S_{M_2}((i-1)m+j_0) \subsetneq S_{M_1}(i)$, and since $G(\cdot)$ is a decreasing function, we get from~(\ref{SPGeneralEq2}) that $\mathbb{E}\left[\hat{e}_{M_1}(i) | \Phi\right] < \mathbb{E}\left[\hat{e}_{M_2}((i-1)m+j_0) | \Phi\right]$. Also note that $\mathbb{E}\left[\hat{e}_{M_2}((i-1)m+j) | \Phi\right] > 0$ for $\forall j \neq j_0$ according to~(\ref{SPGeneralEq2}), we hence have: 
	\begin{align}
	\allowdisplaybreaks
	\prod_{j = 1}^{m} \left[1- \mathbb{E}\left[\hat{e}_{M_2}\left((i-1)m + j\right) | \Phi\right]\right] < 1- \mathbb{E}\left[\hat{e}_{M_2}(i) | \Phi\right].
	\end{align}
	Thus the cell search success probability for the typical IA cycle satisfies: 
	\allowdisplaybreaks
	\begin{align}
	\allowdisplaybreaks
	\pi_{M_2}(\Phi) &=  1 - \prod_{i = 1}^{M_2} \left[ 1- \mathbb{E}\left[\hat{e}_{M_2}(i) | \Phi\right]\right] \nonumber \\
	 &= 1 - \prod_{i = 1}^{M_1} \left( \prod_{j = 1}^{m} \left[1- \mathbb{E}\biggl[\hat{e}_{M_2}\left((i-1)m + j\right) \biggl| \Phi\biggl]\right]\right) \nonumber \\
	 &> 1 - \prod_{i = 1}^{M_1}  \left[1- \mathbb{E}\left[\hat{e}_{M_1}(i) | \Phi\right]\right] = \pi_{M_1}(\Phi).\label{SPLemmaEq1}
	\end{align}
	Finally the proof is concluded by applying Theorem~\ref{IADelayThm1}.
\end{proof}

Lemma~\ref{BeamSweepGeneralLemma} shows that for all BS location models and fading distributions, the conditional number of IA cycles for cell search to succeed decreases when the number of BS antenna/beams is multiplied by an integer $m > 1$, or equivalently when the BS beamwidth is divided by $m$. This result also implies that $\mathbb{E}[L_{cs}(M_2,\lambda)] \leq \mathbb{E}[L_{cs}(M_1,\lambda)]$ if $M_2 = m M_1$. 

\begin{remark}
	In fact, Lemma~\ref{BeamSweepGeneralLemma} cannot be further extended. If $M_2 > M_1$ but $M_2/M_1$ is not an integer, there will always exist special constructions of BS deployments such that $\mathbb{E}[L_{cs}(M_2,\lambda) | \Phi] > \mathbb{E}[L_{cs}(M_1,\lambda) | \Phi]$. 
\end{remark}
%

For the rest of this section, we investigate the mean cell search delay under several specific network scenarios. 
\subsection{Mean Cell Search Delay in Poisson Networks with Rayleigh Fading}
In this part, the BS locations are assumed to form a homogeneous PPP with intensity $\lambda$, and the fading random variables are exponentially distributed with unit mean (i.e., $G(x) = \exp(-x)$). Due to its high analytical tractability, this network setting has been widely adopted to obtain the fundamental design insights for conventional macro cellular networks~\cite{trac}, ultra-dense cellular networks~\cite{zhang2015downlink}, and even mmWave cellular networks\footnote{The SINR and rate trends for mmWave networks under Rayleigh fading and PPP configured BSs have been shown to be close to more realistic fading assumptions, such as the Nakagami fading or log-normal shadowing~\cite{andrews2017modeling}.}~\cite{bai2015coverage,DiRenzo2015Stochastic}. 

Due to the PPP assumption for BSs, and the fact that different BS sectors are non-overlapping, every BS sector can therefore be  detected independently with the same probability. Since the path loss function $\mathit{l}(r)$ is non-decreasing, the BS that provides the minimum path loss to the typical user inside the $i$-th BS sector $\Phi \cap S_M(i)$ (i.e., $x_0^i$) is the closest BS to the origin. The angle of $x_0^i$ is uniformly distributed within $[2\pi(i-1)/M, 2\pi i/M)$, and the CCDF for the norm of $x_0^i$ can be derived as follows:
\begin{align}\label{PLDistCDFEq}
\allowdisplaybreaks
\mathbb{P}(\|x_0^i\| \geq r) &= \mathbb{P}\left( \min_{x \in \Phi \cap S_M(i)} \|x\| \geq r \right)=\exp(-\frac{\lambda \pi r^2}{M}),
\end{align}
where the second equality follows from the void probability for PPPs. Therefore, the probability distribution function (PDF) for $\|x_0^i\|$ is given by: 
\begin{align}\label{PLDistPDFEq}
f_{\|x_0^i\|}(r) = \frac{2\lambda \pi r}{M} \exp(-\frac{\lambda \pi r^2}{M}).
\end{align}

By applying $\Phi \sim \text{PPP}(\lambda)$ and $G(x) = \exp(-x)$ into~(\ref{SPGeneralEq2}), the conditional detection probability for the $i$-th BS sector is given by:
\allowdisplaybreaks
\begin{align}\label{CSSuccProbEq1}
\allowdisplaybreaks
\!\!\!\!\!\mathbb{E}\left[\hat{e}_{M}(i) | \Phi\right] &= \mathbb{E}\biggl[\exp\biggl( -\Gamma_{cs} \mathit{l}(\|x_0^i\|) \biggl(\sum\limits_{x_j^{i} \in \Phi \cap S_M(i) \setminus \{x_0^i\}}\mathit{F}_{j}^{i}/\mathit{l}(\|x_j^{i}\|) + W/PM \biggl)\biggl) \biggl| \Phi\biggl] \nonumber\\ 
&=\exp\biggl(-\frac{W\Gamma_{cs}\mathit{l}(\|x_0^i\|)}{PM}\biggl)\mathbb{E}\biggl[\prod_{x_j^{i} \in \Phi \cap S_M(i) \setminus \{x_0^i\}} \exp\biggl(-\Gamma_{cs} \mathit{l}(\|x_0^i\|)\mathit{F}_{j}^{i}/\mathit{l}(\|x_j^i\|)\biggl) \biggl] \nonumber \\
&\overset{(a)}{=}\exp\biggl(-\frac{W\Gamma_{cs}\mathit{l}(\|x_0^i\|)}{PM}\biggl) \prod_{x_j^{i} \in \Phi \cap S_M(i) \setminus \{x_0^i\}} \frac{1}{1+\Gamma_{cs} \mathit{l}(\|x_0^i\|)/\mathit{l}(\|x_j^i\|)} \triangleq F_M(i,\Phi),
\end{align}
where step (a) is obtained by taking the expectation w.r.t. the fading random variables.
\begin{theorem}\label{ExpCSProbThm}
	If $\Phi \sim \text{PPP}(\lambda)$, and the fading variables are exponentially distributed with unit mean, the mean number of cycles for cell search to succeed is:
	\begin{align}\label{ExpCSProbThmEq1}
	\mathbb{E}[L_{cs}(M,\lambda)] = \sum\limits_{j=0}^{\infty} A_j^M,
	\end{align}
	where $A_j = \mathbb{E}[(1-F_M(1,\Phi))^j]$ is given by:
	\begin{align}\label{ExpCSProbThmEq2}
	\allowdisplaybreaks
	A_j = &\int_{0}^{\infty}  \biggl\{\sum\limits_{k=0}^{j} (-1)^k {j \choose k} \exp\biggl(-\frac{Wk\Gamma_{cs} \mathit{l}(r_1) }{PM}\biggl) \exp\biggl(-\frac{2\pi\lambda}{M} \int_{r_1}^{\infty} \bigg(1-\frac{1}{(1+\Gamma_{cs} \mathit{l}(r_1)/\mathit{l}(r))^k}\bigg)r {\rm d}r\biggr)\biggl\}\nonumber \\
	& \times \frac{2\lambda \pi r_1}{M} \exp(-\frac{\lambda \pi r_1^2}{M}) {\rm d}r_1.
	\end{align}
\end{theorem}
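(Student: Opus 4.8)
The plan is to start from the conditional formula (\ref{IADelayCondEq2}) of Theorem~\ref{IADelayThm1} and exploit the special structure of the PPP--Rayleigh model. First I would observe that, by (\ref{CSSuccProbEq1}), the per-sector detection probability $F_M(i,\Phi)$ depends only on the restriction $\Phi\cap S_M(i)$ of the point process to the $i$-th sector. Since the sectors $\{S_M(i)\}_{i=1}^M$ are disjoint, the independence property of the PPP on disjoint regions makes the random variables $\{F_M(i,\Phi)\}_{i=1}^M$ mutually independent, and the rotational invariance of the PPP together with the fact that all sectors share the common angular width $2\pi/M$ makes them identically distributed. Hence $\pi_M(\Phi)=1-\prod_{i=1}^M(1-F_M(i,\Phi))$ is one minus a product of i.i.d.\ factors.

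Next I would expand the reciprocal in (\ref{IADelayCondEq2}) as a geometric series. Writing $Y=\prod_{i=1}^M(1-F_M(i,\Phi))$, the Remark following Theorem~\ref{IADelayThm1} guarantees $F_M(i,\Phi)>0$ almost surely, so $Y\in[0,1)$ a.s.\ and $\tfrac{1}{1-Y}=\sum_{j=0}^\infty Y^j$. All summands are nonnegative, so Tonelli's theorem justifies interchanging $\mathbb{E}_\Phi$ with the infinite sum, giving $\mathbb{E}[L_{cs}(M,\lambda)]=\sum_{j=0}^\infty \mathbb{E}_\Phi[Y^j]$ (possibly $+\infty$). By the independence and identical distribution of the factors, $\mathbb{E}_\Phi[Y^j]=\prod_{i=1}^M\mathbb{E}_\Phi[(1-F_M(i,\Phi))^j]=A_j^M$ with $A_j=\mathbb{E}_\Phi[(1-F_M(1,\Phi))^j]$, which is exactly (\ref{ExpCSProbThmEq1}).

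It then remains to evaluate $A_j$ in closed form. I would expand the $j$-th power binomially, $(1-F_M(1,\Phi))^j=\sum_{k=0}^j{j\choose k}(-1)^k F_M(1,\Phi)^k$, so that $A_j=\sum_{k=0}^j{j\choose k}(-1)^k\,\mathbb{E}_\Phi[F_M(1,\Phi)^k]$, and compute each moment $\mathbb{E}_\Phi[F_M(1,\Phi)^k]$ separately. Conditioning on the nearest-BS distance $r_1=\|x_0^1\|$ (whose density is (\ref{PLDistPDFEq})), the remaining BSs in sector $1$ form a PPP of intensity $\lambda$ on $S_M(1)\setminus B(o,r_1)$. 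Raising $F_M(1,\Phi)$ in (\ref{CSSuccProbEq1}) to the $k$-th power turns the noise exponential into $\exp(-kW\Gamma_{cs}\mathit{l}(r_1)/(PM))$ and replaces each interference factor by its $k$-th power; applying the probability generating functional of the PPP, namely $\mathbb{E}[\prod_{x}f(x)]=\exp(-\int(1-f(x))\,\Lambda({\rm d}x))$, with $f(r)=(1+\Gamma_{cs}\mathit{l}(r_1)/\mathit{l}(r))^{-k}$ and the angular integral contributing the factor $2\pi/M$, produces the inner exponential of (\ref{ExpCSProbThmEq2}). Integrating over $r_1$ against (\ref{PLDistPDFEq}) and interchanging the finite $k$-sum with the $r_1$-integral yields (\ref{ExpCSProbThmEq2}).

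The main obstacle is the moment computation of the final paragraph: one must correctly identify that conditioning on the closest point being at distance $r_1$ leaves, outside $B(o,r_1)$, an ordinary PPP on the sector (so the radial interference integral starts at $r_1$ and carries the angular prefactor $2\pi/M$), and then carry the $k$-th power cleanly through the PGFL. The remaining analytic subtleties---that each $F_M(i,\Phi)>0$ a.s.\ so the geometric expansion is valid, and that the sum--integral interchanges are legitimate (Tonelli for the infinite $j$-series, finiteness of the $k$-sum for the other)---are routine; note that no claim of finiteness of $\sum_j A_j^M$ is required here, since the identity holds in $[0,\infty]$, and the finiteness question is precisely what the subsequent theorems address.
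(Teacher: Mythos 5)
Your proposal is correct and follows essentially the same route as the paper's proof: geometric-series expansion of the reciprocal in~(\ref{IADelayCondEq2}), interchange of sum and expectation for nonnegative terms (your Tonelli is the paper's monotone convergence), the i.i.d.\ structure of the disjoint sectors under the PPP to get $A_j^M$, and then binomial expansion of $(1-F_M(1,\Phi))^j$ with conditioning on $\|x_0^1\|=r_1$ followed by the PGFL. Your observation that conditioning on the nearest point leaves an ordinary PPP on $S_M(1)\setminus B(o,r_1)$ is just the paper's Palm/Slivnyak step stated in different language, so the two arguments coincide.
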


\begin{proof}
By substituting~(\ref{CSSuccProbEq1}) into Theorem~\ref{IADelayThm1}, we obtain:	
\begin{align}
\allowdisplaybreaks
\mathbb{E}[L_{cs}(M,\lambda)] &= \mathbb{E}\biggl[\frac{1}{1 - \prod_{i = 1}^{M} \left[ 1- F_M(i,\Phi)\right]}\biggl]\nonumber \\
&\overset{(a)}{=}\mathbb{E}\biggl[\sum_{j=0}^{\infty}  \biggl(\prod_{i = 1}^{M} \left[ 1- F_M(i,\Phi)\right] \biggl)^j\biggl] \nonumber \\
&\overset{(b)}{=} \sum_{j=0}^{\infty} \mathbb{E}\biggl[\biggl(\prod_{i = 1}^{M} \left[ 1- F_M(i,\Phi)\right] \biggl)^j\biggl] \nonumber \\
&\overset{(c)}{=} \sum_{j=0}^{\infty} \biggl\{\mathbb{E}\biggl[ \biggl( 1- F_M(1,\Phi)\biggl)^j\biggl]\biggl\}^M,
\end{align}
where step (a) is derived from the fact that $\frac{1}{1-x} = \sum_{j=0}^{\infty} x^j$ for $0 \leq x < 1$, step (b) follows from the monotone convergence theorem, and step (c) is because the events for BS sectors to be detected are i.i.d. for PPP distributed BSs. Furthermore, we can compute $A_j$ as follows:
\allowdisplaybreaks
\begin{align}\label{CSProbProofEq2}
\allowdisplaybreaks
&\mathbb{E}\biggl[ \biggl( 1- F_M(1,\Phi)\biggl)^j\biggl] \nonumber \\
=&\int_{0}^{\infty} \mathbb{E}\biggl[ \biggl( 1- F_M(1,\Phi)\biggl)^j \biggl| x_0^1 = (r_1,0) \biggl] \frac{2\lambda \pi r_1}{M} \exp(-\frac{\lambda \pi r_1^2}{M}) {\rm d}r \nonumber\\
\overset{(a)}{=}&\int_{0}^{\infty} \sum_{k=0}^{j} (-1)^k {j \choose k}\mathbb{E}_{\Phi}^{x_0^1}\left[ \left(F_M(1,\Phi)\right)^k \biggl| \Phi \cap S_M(1) \cap B(o,r_1) = 0\right] \frac{2\lambda \pi r_1}{M} \exp(-\frac{\lambda \pi r_1^2}{M}) {\rm d}r \nonumber \\
\overset{(b)}{=}& \int_{0}^{\infty} \sum_{k=0}^{j} (-1)^k {j \choose k} \mathbb{E} \biggl[\exp\left(-\frac{Wk\Gamma_{cs}\mathit{l}(r_1)}{PM}\right) \!\!\!\!\! \prod_{x_j^{i} \in \Phi \cap S_M(i) \cap B^c(o,r_1)}  \!\!\!\!\!\frac{1}{(1+\Gamma_{cs} \mathit{l}(r_1)/\mathit{l}(\|x_j^i\|))^k}\biggl]\nonumber\\
&\times \frac{2\lambda \pi r_1}{M} \exp(-\frac{\lambda \pi r_1^2}{M}) {\rm d}r,
\end{align}
where $\mathbb{E}_{\Phi}^{x_0^1}[\cdot]$ in (a) denotes the expectation under the Palm distribution at BS $x_0^1$; and step (b) is derived from Slivnyak's theorem. Finally the proof can be concluded by applying the probability generating functional (PGFL) of PPPs~\cite{chiu2013stochastic} to~(\ref{CSProbProofEq2}).
\end{proof}
\begin{remark}
	Theorem~\ref{ExpCSProbThm} can be interpreted as $\mathbb{E}[L_{cs}(M,\lambda)] = \sum_{j=0}^{\infty}\mathbb{P}(L_{cs}(M,\lambda) > j)$, with $A_j^M$ in~(\ref{ExpCSProbThmEq1}) representing the probability that the BS sectors are not detected within $j$ IA cycles, i.e., $\mathbb{P}(L_{cs}(M,\lambda) > j)$. 
\end{remark}

Theorem~\ref{ExpCSProbThm} provides a series representation of the expected number of IA cycles to succeed cell search. However, 
it is unclear from Theorem~\ref{ExpCSProbThm} whether $\mathbb{E}[L_{cs}(M)]$ is finite or not. 
In the following, we will investigate the finiteness of $\mathbb{E}[L_{cs}(M,\lambda)]$ under two representative network scenarios, namely the noise limited scenario and the interference limited scenario. 
\subsubsection{Noise limited Scenario}
In the noise limited scenario, we assume the noise power dominates the interference power (or interference power is perfectly canceled), such that only noise power needs to be taken into account. Compared to conventional micro-wave cellular networks that operate in sub-6 GHz bands, mmWave networks have much higher noise power due to the wider bandwidth, and the interference power is much smaller due to the high isotropic path loss in mmWave. As a result, mmWave cellular networks are typically noise limited, especially when the carrier frequency and system bandwidth are high enough (e.g. 73 GHz carrier frequency with 2 GHz bandwidth)~\cite{singh2015tractable,bai2015coverage}. 

Since the interference power is zero under the noise limited scenario, Theorem~\ref{ExpCSProbThm} becomes:
\allowdisplaybreaks
\begin{align}\label{NumSlotCSNoiseLimitedEq1}
\allowdisplaybreaks
\mathbb{E}[L_{cs}(M,\lambda)] = \sum\limits_{j=0}^{\infty} \biggl\{\int_{0}^{\infty}  \biggl(1-\exp\left(-\frac{W\Gamma_{cs} \mathit{l}(r_1) }{PM}\right) \biggl)^{j}\frac{2\lambda \pi r_1}{M} \exp(-\frac{\lambda \pi r_1^2}{M}) {\rm d}r_1 \biggl\}^M.
\end{align}
Through the change of variable ($v = \lambda r^2$),~(\ref{NumSlotCSNoiseLimitedEq1}) becomes 
\begin{align}\label{NumSlotCSNoiseLimitedEq2}
\allowdisplaybreaks
\mathbb{E}[L_{cs}(M,\lambda)] = \sum\limits_{j=0}^{\infty} \biggl\{\int_{0}^{\infty}  \biggl(1-\exp\left(-\frac{W\Gamma_{cs} \mathit{l}(\sqrt{(v/\lambda)}) }{PM}\right) \biggl)^{j}\frac{2 \pi}{M} \exp(-\frac{ \pi v}{M}) {\rm d}v\biggl\}^M,
\end{align}
which shows that $\mathbb{E}[L_{cs}(M,\lambda)]$ is non-increasing as the BS intensity $\lambda$ increases, i.e., network densification helps in reducing the number of IA cycles to succeed in cell search. 

In the next two lemmas, we prove that the finiteness of $\mathbb{E}[L_{cs}(M,\lambda)]$ depends on the NLOS path loss exponent $\alpha_N$, and that a phase transition for $\mathbb{E}[L_{cs}(M,\lambda)]$ happens when $\alpha_N =2$.
\begin{theorem}\label{NoiseLimLemma1}
	Under the noise limited scenario, for any finite number of BS antennas/beams $M$ and BS intensity $\lambda$, $\mathbb{E}[L_{cs}(M,\lambda)] = \infty$ whenever the NLOS path loss exponent $\alpha_N > 2$. 
\end{theorem}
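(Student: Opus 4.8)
The plan is to bound $\mathbb{E}[L_{cs}(M,\lambda)]$ from below by a quantity that depends only on the distance $R_0$ from the typical user to its globally nearest BS, and then show that this lower bound is infinite when $\alpha_N > 2$. First I would start from Theorem~\ref{IADelayThm1}, which gives $\mathbb{E}[L_{cs}(M,\lambda)] = \mathbb{E}_{\Phi}[1/\pi_M(\Phi)]$ with $\pi_M(\Phi) = 1 - \prod_{i=1}^M [1 - \mathbb{E}[\hat{e}_M(i)|\Phi]]$, and specialize the conditional detection probability to the noise-limited case, where~(\ref{CSSuccProbEq1}) reduces to $\mathbb{E}[\hat{e}_M(i)|\Phi] = \exp(-c\, \mathit{l}(\|x_0^i\|))$ with $c = W\Gamma_{cs}/(PM) > 0$.

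The key step is to collapse the product over the $M$ sectors into a single dominant term. Using the Weierstrass inequality $\prod_{i=1}^M(1 - p_i) \ge 1 - \sum_{i=1}^M p_i$ for $p_i \in [0,1]$, followed by $\sum_i p_i \le M \max_i p_i$, I would obtain
\[ \pi_M(\Phi) \le M \max_{1 \le i \le M} \exp(-c\, \mathit{l}(\|x_0^i\|)) = M \exp\big(-c\, \mathit{l}(R_0)\big), \]
where the last equality uses that $\mathit{l}$ is non-decreasing, so the maximum is attained at the sector containing the globally closest BS and $R_0 = \min_i \|x_0^i\|$. Inverting and taking the expectation over $\Phi$ yields the clean lower bound
\[ \mathbb{E}[L_{cs}(M,\lambda)] = \mathbb{E}_{\Phi}\Big[\tfrac{1}{\pi_M(\Phi)}\Big] \ge \tfrac{1}{M}\, \mathbb{E}\big[\exp(c\, \mathit{l}(R_0))\big]. \]

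It then remains to show that the right-hand side diverges. Since $R_0$ is the nearest-neighbour distance of a homogeneous PPP, its density is $f_{R_0}(r) = 2\pi\lambda r\exp(-\pi\lambda r^2)$, so restricting the integration to $r \ge R_C$ and inserting the NLOS law $\mathit{l}(r) = C_N r^{\alpha_N}$ gives
\[ \mathbb{E}\big[\exp(c\, \mathit{l}(R_0))\big] \ge \int_{R_C}^{\infty} 2\pi\lambda r \exp\big(c\, C_N r^{\alpha_N} - \pi\lambda r^2\big)\,{\rm d}r. \]
Because $\alpha_N > 2$, the exponent $c\,C_N r^{\alpha_N} - \pi\lambda r^2 \to +\infty$ as $r \to \infty$, so the integrand itself tends to $+\infty$ and the integral is infinite for every finite $M$ and $\lambda$. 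This forces $\mathbb{E}[L_{cs}(M,\lambda)] = \infty$ and completes the argument.

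I expect the only genuinely delicate step to be the reduction in the second paragraph: recognizing that the multiplicative structure of $\pi_M(\Phi)$ is dominated by the single nearest-BS term is what turns an intractable functional of the whole point process into a one-dimensional integral, and it is precisely this term that exposes the heavy tail of $1/\pi_M(\Phi)$ created by the temporal correlation (one unusually distant nearest BS inflates the geometric mean delay across all cycles). Once this bound is in place, the divergence is an immediate consequence of $\alpha_N > 2$. As an alternative I could instead work directly from the series in~(\ref{NumSlotCSNoiseLimitedEq2}), estimating the inner integral $B_j$ via the threshold $v$ at which $\exp(-c\,\mathit{l}(\sqrt{v/\lambda})) \approx 1/j$ to obtain $B_j^M \ge c_0\exp\!\big(-K(\ln j)^{2/\alpha_N}\big)$ for constants $c_0,K>0$, which is non-summable exactly because $2/\alpha_N < 1$; this route also reveals the phase transition at $\alpha_N = 2$, but it is messier, so I would keep the nearest-BS bound as the main proof.
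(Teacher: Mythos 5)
Your proof is correct, and it reaches the conclusion by a more direct route than the paper, though the two arguments share the same analytic core. The paper stays inside the series representation of Theorem~\ref{ExpCSProbThm}: it truncates the spatial integral in each of the $M$ factors to $r_1 \geq v_0 > R_c$, resums the geometric series to obtain $\exp(-\lambda\pi v_0^2)\bigl(1-(1-\exp(-W\Gamma_{cs}C_N v_0^{\alpha_N}/PM))^M\bigr)^{-1}$, and then applies $(1-x)^M + Mx \geq 1$ before letting $v_0 \to \infty$. That last inequality is exactly your Weierstrass/union bound $\pi_M(\Phi) \leq M\exp(-c\,\mathit{l}(R_0))$, and the truncation at $v_0$ is the paper's surrogate for your conditioning on the nearest-BS distance: both proofs isolate the configurations in which every sector's closest BS is far, weight them by $\exp(-\lambda\pi v_0^2)$, and observe that the conditional mean delay of order $\frac{1}{M}\exp(cC_N v_0^{\alpha_N})$ wins because $\alpha_N > 2$. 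What your packaging buys is economy: you bypass the series expansion and the limiting parameter entirely and obtain the clean quantitative statement $\mathbb{E}[L_{cs}(M,\lambda)] \geq \frac{1}{M}\mathbb{E}[\exp(c\,\mathit{l}(R_0))]$, a single one-dimensional integral whose divergence is immediate; the same bound also delivers the divergence half of the $\alpha_N = 2$ phase transition in Theorem~\ref{NoiseLmdPLE2Lem}, since for $\alpha_N = 2$ the exponent $cC_N - \lambda\pi$ is positive exactly when $\lambda M < \Gamma_{cs}C_N W/(P\pi)$. What the paper's route buys is uniformity of machinery, since the series-plus-truncation template is the one reused in its other results. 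Your fallback sketch via the tail estimate $B_j^M \gtrsim \exp(-K(\ln j)^{2/\alpha_N})$ is also sound (non-summability follows from $2/\alpha_N < 1$) and is closest in spirit to the paper's actual starting point, but your primary nearest-BS bound is the tidier argument. One small point worth a line in a write-up: the inequality $1/\pi_M(\Phi) \geq \frac{1}{M}\exp(c\,\mathit{l}(R_0))$ holds pointwise even when the right-hand side is below $1$ (where it is weaker than the trivial bound $1/\pi_M(\Phi)\geq 1$ but still valid), so no case analysis is needed before taking expectations.
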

\begin{proof}
Given the number of BS antennas/beams $M$ and for any arbitrarily large positive value $v_0$ with $v_0 > R_c$, we can re-write~(\ref{NumSlotCSNoiseLimitedEq1}) to obtain the following lower bound on $\mathbb{E}[L_{cs}(M)]$:
\allowdisplaybreaks
\begin{align}\label{NoiseLimitedLemmaEq2}
\allowdisplaybreaks
&\sum\limits_{j=0}^{\infty} \biggl\{\int_{0}^{\infty}  \biggl(1-\exp\left(-\frac{W\Gamma_{cs} \mathit{l}(r_1) }{PM}\right) \biggl)^{j}\frac{2\lambda \pi r_1}{M} \exp(-\frac{\lambda \pi r_1^2}{M}) {\rm d}r_1 \biggl\}^M \nonumber\\
\overset{(a)}{\geq}& \sum\limits_{j=0}^{\infty} \biggl\{\int_{v_0}^{\infty}  \biggl(1-\exp\left(-\frac{W\Gamma_{cs} C_N r_1^{\alpha_N} }{PM}\right) \biggl)^{j}\frac{2\lambda \pi r_1}{M} \exp(-\frac{\lambda \pi r_1^2}{M}) {\rm d}r_1 \biggl\}^M  \nonumber\\
> & \sum\limits_{j=0}^{\infty}  \biggl\{ \biggl(1-\exp\left(-\frac{W\Gamma_{cs} C_N v_0^{\alpha_N} }{PM}\right) \biggl)^{j} \int_{v_0}^{\infty}  \frac{2\lambda \pi r_1}{M} \exp(-\frac{\lambda \pi r_1^2}{M}) {\rm d}r_1 \biggl\}^M  \nonumber\\
= & \sum\limits_{j=0}^{\infty} \biggl(1-\exp\left(-\frac{W\Gamma_{cs} C_N v_0^{\alpha_N} }{PM}\right) \biggl)^{jM}  \exp(-\lambda \pi v_0^2)  \nonumber\\
= & \frac{ \exp(-\lambda \pi v_0^2) }{1-(1-\exp(-W\Gamma_{cs} C_N v_0^{\alpha_N}/PM))^M} \nonumber\\
\overset{(b)}{\geq}& \frac{1}{M} \exp\left(W\Gamma_{cs} C_N v_0^{\alpha_N}/PM -  \lambda \pi v_0^2\right) \overset{v_0\rightarrow\infty}{\longrightarrow} \infty,
\end{align}
where $\mathit{l}(r_1) = C_N r_{1}^{\alpha_N}$ in step (a) because $r_1 \geq v_0 > R_c$. Step (b) follows from the fact that for any $0 \leq x \leq 1$ and $M \in \mathbb{N}^{+}$, we have: $(1-x)^M + xM \geq 1$, thus $\frac{1}{1-(1-x)^M} \geq \frac{1}{xM}$. Note that since $\alpha_N >2$,~(\ref{NoiseLimitedLemmaEq2}) goes to infinity when $v_0$ goes to infinity, which completes the proof. 
\end{proof}

According to Lemma~\ref{NoiseLimLemma1}, the expected cell search delay is infinity whenever $\alpha_N > 2$, which cannot be alleviated by BS densification (i.e., increase $\lambda$), or using a higher number of BS antennas (i.e., increase $M$). The reason can be explained from~(\ref{NoiseLimitedLemmaEq2}), which shows that due to the PPP-configured BS deployment, the typical user could be located at the ``cell edge" with its closest BS inside every BS sector farther than some arbitrarily large distance $v$. There is a $\exp(-\lambda \pi v^2)$ fraction of such cell edge users, and the corresponding number of IA cycles required for them to succeed in cell search is at least $\exp(C v^{\alpha_N})$ for some $C>0$. Therefore, the expected cell search delay averaged over all the users will ultimately be infinite when $\alpha_N>2$. From a system level perspective, this indicates that for noise limited networks with $\alpha_N > 2$, there will always be a significant fraction of cell edge users requiring a very large number of IA cycle to succeed cell search, so that the spatial averaged cell search delay perceived by all users will be determined largely by these cell edge users, which explains why an infinite mean cell search delay is observed.

\begin{theorem}~\label{NoiseLmdPLE2Lem}
	Under the noise limited scenario with NLOS path loss exponent $\alpha_N = 2$, the expected number of IA cycles to succeed in cell search $\mathbb{E}[L_{cs}(M,\lambda)] = \infty$ if the BS density $\lambda$ and the BS antenna/beam number $M$ satisfy $\lambda M < \frac{\Gamma_{cs}C_N W}{P\pi}$, and $\mathbb{E}[L_{cs}(M,\lambda)] < \infty$ if $\lambda M > \frac{\Gamma_{cs}C_N W}{P\pi}$, i.e., the phase transition for $\mathbb{E}[L_{cs}(M,\lambda)]$ happens at $(\lambda^*, M^*)$ with $\lambda^* M^*= \frac{\Gamma_{cs}C_N W}{P\pi}$.
\end{theorem}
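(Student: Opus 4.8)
The plan is to prove the two directions separately, with the phase‑transition threshold $\lambda M = \frac{\Gamma_{cs}C_N W}{P\pi}$ emerging from the competition between the exponential decay rate $\lambda\pi/M$ of the nearest‑BS distance and the exponential growth rate $W\Gamma_{cs}C_N/(PM)$ of the per‑cycle failure probability (valid once $\alpha_N=2$). The divergence half will essentially be inherited from the proof of Theorem~\ref{NoiseLimLemma1}, while the finiteness half is the substantive part and requires a sharper estimate than the one used there.

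For the divergence claim (when $\lambda M < \frac{\Gamma_{cs}C_N W}{P\pi}$) I would reuse the lower bound already established inside the proof of Theorem~\ref{NoiseLimLemma1}, namely inequality~(\ref{NoiseLimitedLemmaEq2}), which holds for every $v_0 > R_c$ and gives $\mathbb{E}[L_{cs}(M,\lambda)] \geq \frac{1}{M}\exp\!\left(\frac{W\Gamma_{cs}C_N v_0^{\alpha_N}}{PM} - \lambda\pi v_0^2\right)$. Specializing $\alpha_N=2$ turns the exponent into $v_0^2\!\left(\frac{W\Gamma_{cs}C_N}{PM} - \lambda\pi\right)$, whose coefficient is strictly positive exactly when $\lambda M < \frac{\Gamma_{cs}C_N W}{P\pi}$; letting $v_0\to\infty$ then forces $\mathbb{E}[L_{cs}(M,\lambda)]=\infty$, with no appeal to $A_j$-asymptotics needed.

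The finiteness claim (when $\lambda M > \frac{\Gamma_{cs}C_N W}{P\pi}$) is where the work lies. I would start from the exact series $\mathbb{E}[L_{cs}(M,\lambda)] = \sum_{j=0}^{\infty} A_j^M$ of Theorem~\ref{ExpCSProbThm}, in which the noise‑limited coefficient from~(\ref{NumSlotCSNoiseLimitedEq1}) is $A_j = \int_0^\infty \bigl(1-\exp(-\tfrac{W\Gamma_{cs}\mathit{l}(r_1)}{PM})\bigr)^j \tfrac{2\lambda\pi r_1}{M}\exp(-\tfrac{\lambda\pi r_1^2}{M})\,{\rm d}r_1$, and I would establish the sharp large-$j$ decay rate of $A_j$. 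The crucial point is to \emph{avoid} the Jensen's‑inequality bound that underlies the weaker single‑$\lambda$ threshold: pulling the $M$‑th power inside the integral discards precisely the factor $M$ and only yields the sufficient condition $\lambda > \frac{\Gamma_{cs}C_N W}{P\pi}$. Instead I would split the integral at $R_c$. On $[0,R_c]$ the factor $1-\exp(-\tfrac{W\Gamma_{cs}C_L r_1^{\alpha_L}}{PM})$ is bounded away from $1$, so that piece is $O(\rho^j)$ for some $\rho<1$ and is negligible. On $[R_c,\infty)$, where $\mathit{l}(r_1)=C_N r_1^2$, I would substitute $u=r_1^2$ and then $t=\exp(-\tfrac{W\Gamma_{cs}C_N}{PM}u)$; writing $b=\tfrac{W\Gamma_{cs}C_N}{PM}$, $c=\tfrac{\lambda\pi}{M}$, $t_0=e^{-bR_c^2}$, the tail becomes $\tfrac{c}{b}\int_0^{t_0}(1-t)^j t^{s-1}\,{\rm d}t$ with $s=\tfrac{c}{b}=\tfrac{\lambda\pi P}{W\Gamma_{cs}C_N}$, in which $M$ cancels. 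Bounding this above by the complete Beta integral $\tfrac{c}{b}\int_0^1(1-t)^j t^{s-1}\,{\rm d}t=\tfrac{c}{b}\tfrac{\Gamma(s)\Gamma(j+1)}{\Gamma(j+1+s)}=O(j^{-s})$ gives $A_j=O(j^{-s})$, hence $A_j^M=O(j^{-Ms})$, and by the $p$-series test $\sum_j A_j^M<\infty$ exactly when $Ms>1$, i.e. $\lambda M > \frac{\Gamma_{cs}C_N W}{P\pi}$.

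The main obstacle is pinning down the decay exponent of $A_j$ cleanly: one must justify that the polynomial tail from $[R_c,\infty)$ dominates the geometric $[0,R_c]$ contribution, and control the passage from $\int_0^{t_0}(1-t)^j t^{s-1}\,{\rm d}t$ to the Beta function together with the ratio asymptotics $\Gamma(j+1)/\Gamma(j+1+s)\sim j^{-s}$. The conceptual subtlety — and the reason the product threshold $\lambda M$ (rather than $\lambda$ alone) had gone unnoticed — is that $M$ cancels inside $s$, so the single‑sector failure contribution decays like $j^{-s}$ independently of the beam number, while the $M$‑fold independence across sectors re‑enters only through the power $A_j^M$; it is this interplay that pins the transition to the product $\lambda M$.
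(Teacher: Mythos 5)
Your proposal is correct, and its substantive half takes a genuinely different route from the paper. The divergence direction is identical: the paper also simply specializes the lower bound~(\ref{NoiseLimitedLemmaEq2}) to $\alpha_N=2$, where the exponent becomes $v_0^2\bigl(\tfrac{W\Gamma_{cs}C_N}{PM}-\lambda\pi\bigr)$ and blows up exactly when $\lambda M<\tfrac{\Gamma_{cs}C_NW}{P\pi}$. For finiteness, however, the paper does not touch the series at all: it invokes the universal upper bound of Remark~\ref{UnivUBRmk} (detection of the sector containing the globally nearest BS), which in the noise-limited case collapses to $\mathbb{E}[L_{cs}(M,\lambda)]\le\int_0^\infty \exp\bigl(\tfrac{W\Gamma_{cs}\mathit{l}(r_0)}{PM}\bigr)\,2\pi\lambda r_0 e^{-\lambda\pi r_0^2}\,{\rm d}r_0$ as in~(\ref{NoiseLimitedLemmaEq3}); the decisive feature, playing the role of your cancellation of $M$ inside $s$, is that the nearest-BS distance over all of $\Phi$ has exponential rate $\lambda\pi$ (not $\lambda\pi/M$) while the SNR deficit grows at rate $\tfrac{W\Gamma_{cs}C_N}{PM}$, so the Gaussian-type integral converges precisely when $\lambda M>\tfrac{\Gamma_{cs}C_NW}{P\pi}$. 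Your route instead stays with the exact series of Theorem~\ref{ExpCSProbThm}, splits at $R_c$, and extracts the sharp decay $A_j=O(j^{-s})$, $s=\tfrac{\lambda\pi P}{W\Gamma_{cs}C_N}$, via the substitution $t=e^{-bu}$ and the Beta integral; the steps you flag as obstacles are all sound (the $[0,R_c]$ piece is geometric by monotonicity of $\mathit{l}$, the truncated Beta is dominated by the complete one since the integrand is nonnegative, and $\Gamma(j+1)/\Gamma(j+1+s)\sim j^{-s}$ is standard). What your argument buys beyond the paper's: it identifies the actual Pareto tail $\mathbb{P}(L_{cs}(M,\lambda)>j)=A_j^M=\Theta(j^{-Ms})$ — the matching lower bound is immediate since $\int_{t_0}^1(1-t)^jt^{s-1}{\rm d}t$ is geometrically small — which quantifies the heavy-tail behavior observed numerically in Section~\ref{NumEval} and would also re-derive the divergence half whenever $Ms<1$, without appealing to~(\ref{NoiseLimitedLemmaEq2}). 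What the paper's argument buys: it is a one-line computation requiring no large-$j$ asymptotics, and the same bound doubles as the cell-edge scaling estimate $\exp\bigl(\tfrac{W\Gamma_{cs}C_Nr_0^2}{PM}\bigr)$ used in the discussion following the theorem. Your diagnosis of why the product threshold $\lambda M$ arises — $M$ cancels in the per-sector decay exponent $s$ and re-enters only through the power $A_j^M$ — is exactly right, and correctly explains why the cruder Jensen-type bound (which yields only $\lambda>\tfrac{\Gamma_{cs}C_NW}{P\pi}$) must be avoided.
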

\begin{proof}
	If $\alpha_N = 2$, it is clear from~(\ref{NoiseLimitedLemmaEq2}) that $\mathbb{E}[L_{cs}(M,\lambda)] = \infty$ if $\lambda M< \frac{\Gamma_{cs}C_N W}{P\pi}$. In addition, we can simplify the upper bound to $\mathbb{E}[L_{cs}(M,\lambda)]$ from Remark~\ref{UnivUBRmk} under the noise limited scenario, which is given as follows:
	\begin{align}\label{NoiseLimitedLemmaEq3}
	\allowdisplaybreaks
	&\mathbb{E}[L_{cs}(M,\lambda)]\nonumber\\ 
	\overset{(a)}{\leq}&\int_{0}^{\infty}\exp\biggl(\frac{W\Gamma_{cs}\mathit{l}(r_0)}{PM}\biggl) \lambda 2\pi r_0 \exp(-\lambda \pi r_0^2) {\rm d}r_0 \nonumber\\
	=&\int_{0}^{R_c}\exp\biggl(\frac{W\Gamma_{cs}C_L r_0^{\alpha_L}}{PM}\biggl) \lambda 2\pi r_0 \exp(-\lambda \pi r_0^2) {\rm d}r_0+ \int_{R_c}^{\infty}\exp\biggl(\frac{W\Gamma_{cs}C_N r_0^{\alpha_N}}{PM}\biggl) \lambda 2\pi r_0 \exp(-\lambda \pi r_0^2) {\rm d}r_0\nonumber\\
	<&\exp\biggl(\frac{W\Gamma_{cs}C_L R_c^{\alpha_L}}{PM}\biggl) \biggl(1-\exp(-\lambda \pi R_c^2)\biggl)+ \int_{R_c}^{\infty}\exp\biggl(\frac{W\Gamma_{cs}C_N r_0^{\alpha_N}}{PM}\biggl) \lambda 2\pi r_0 \exp(-\lambda \pi r_0^2) {\rm d}r_0,
	\end{align}
	where (a) is obtained by applying the noise limited assumption to (\ref{CSSuccProbEq1}), and noting that the BS providing the smallest path loss among all the BSs is the closest BS of $\Phi$ to the origin. Since $\alpha_N =2$, it can be observed from~(\ref{NoiseLimitedLemmaEq3}) that $\mathbb{E}[L_{cs}(M)]$ is guaranteed to have a finite mean if $\lambda M > \frac{\Gamma_{cs}C_N W}{P\pi}$. 
\end{proof}

We can observe from the proof of Lemma~\ref{NoiseLmdPLE2Lem} that for any arbitrarily large distance $r_0$, there is a fraction $\exp(-\lambda \pi r_0^2)$ of cell edge users whose nearest BSs are farther than $r_0$, and the number of IA cycles for these edge users to succeed  cell search scales as $\exp(\frac{W\Gamma_{cs}C_N r_0^{2}}{PM})$. As a result, if the BS deployment is too sparse or the number of BS antennas/beams is such that $\lambda M< \frac{\Gamma_{cs}C_N W}{P\pi}$, the cell search delay averaged over all the users becomes infinity due to cell edge users. By contrast, with network densification, the fraction of cell edge users with poor signal power is reduced, and the average cell search delay can be reduced to a finite mean value whenever  $\lambda M> \frac{\Gamma_{cs}C_N W}{P\pi}$. A similar behavior happens when the BSs are using more antennas to increase the SNR for the cell edge users.

To summarize, for the noise limited scenario such as a mmWave network, 
the mean cell search delay is infinite whenever the NLOS path loss exponent $\alpha_N >2$, which is typically the case. However, for the special case with NLOS path loss exponent $\alpha_N = 2$, the mean cell search delay could switch from infinity to a finite value through careful network design, such as BS densification or adopting more BS antennas.
\subsubsection{Interference limited Scenario}\label{IntLmtSubSec} In the interference limited scenario, the noise power is dominated by the interference power, so that we can assume $W = 0$. For example, a massive MIMO network that operates in the sub-6 GHz bands is typically interference limited~\cite{Marzetta2010noncooperative}. In this part, we investigate the cell search delay in an interference-limited network with a standard single slope path loss function $\mathit{l}(r) = C r^\alpha$, which is suitable for networks with sparsely deployed BSs as opposed to ultra-dense networks~\cite{zhang2015downlink}. 

First, we prove that Theorem~\ref{ExpCSProbThm} can be greatly simplified under this interference limited scenario. 
\begin{lemma}\label{ExpectedCSIntLmtdLemma}
	Under the interference limited scenario, the expected number of initial access cycles required to succeed in cell search  is given by:
	\begin{align}\label{ExpectedCSIntLmtdEq}
	\mathbb{E}[L_{cs}(M)] = \sum_{j=0}^{\infty} \bigg(\sum_{k=0}^{j} \frac{(-1)^k {j \choose k}}{1+2\int_{1}^{+\infty} (1-(1+\Gamma_{cs}/r^\alpha)^{-k})r {\rm d}r}\bigg)^M.
	\end{align}
\end{lemma}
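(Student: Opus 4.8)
The plan is to start directly from the series in Theorem~\ref{ExpCSProbThm}, namely $\mathbb{E}[L_{cs}(M,\lambda)] = \sum_{j=0}^{\infty} A_j^M$ with $A_j$ given by~(\ref{ExpCSProbThmEq2}), and specialize it to the present regime. First I would impose the two defining assumptions of this scenario: $W = 0$ (interference-limited) and the single-slope law $\mathit{l}(r) = C r^{\alpha}$. Setting $W=0$ immediately kills the noise factor $\exp(-Wk\Gamma_{cs}\mathit{l}(r_1)/PM)$, collapsing it to $1$, so that
\begin{align}
A_j = \int_{0}^{\infty} \sum_{k=0}^{j} (-1)^k {j \choose k} \exp\biggl(-\frac{2\pi\lambda}{M}\int_{r_1}^{\infty}\Bigl(1-\frac{1}{(1+\Gamma_{cs}\mathit{l}(r_1)/\mathit{l}(r))^k}\Bigr)r\,{\rm d}r\biggr)\frac{2\lambda\pi r_1}{M}\exp\Bigl(-\frac{\lambda\pi r_1^2}{M}\Bigr)\,{\rm d}r_1,
\end{align}
with $\mathit{l}(r_1)/\mathit{l}(r) = (r_1/r)^{\alpha}$.

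The central step is a scaling change of variable in the inner PGFL integral. Substituting $t = r/r_1$ (so $r\,{\rm d}r = r_1^2 t\,{\rm d}t$ and $(r_1/r)^\alpha = t^{-\alpha}$) factors the inner integral as $r_1^2 I_k$, where $I_k \triangleq \int_{1}^{\infty}\bigl(1-(1+\Gamma_{cs}/t^{\alpha})^{-k}\bigr)t\,{\rm d}t$ no longer depends on $r_1$. I would then exchange the finite sum over $k$ with the outer integral over $r_1$ (legitimate since the $k$-sum is finite), which reduces each term to a Gaussian-type integral: the exponent becomes $-\frac{\lambda\pi}{M}(2I_k+1)r_1^2$, and using $\int_0^\infty 2\beta r\,e^{-\beta r^2}\,{\rm d}r = 1$ one obtains
\begin{align}
\int_{0}^{\infty}\frac{2\lambda\pi r_1}{M}\exp\Bigl(-\frac{\lambda\pi}{M}(2I_k+1)r_1^2\Bigr)\,{\rm d}r_1 = \frac{1}{1+2I_k}.
\end{align}
Hence $A_j = \sum_{k=0}^{j}\frac{(-1)^k {j \choose k}}{1+2I_k}$, and raising to the $M$-th power and summing over $j$ yields~(\ref{ExpectedCSIntLmtdEq}) exactly.

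The notable feature to flag is that both $\lambda$ and $C$ cancel completely: the interference-limited single-slope model is scale-invariant, so the mean number of cycles depends only on $\alpha$, $\Gamma_{cs}$ and $M$. I do not expect a hard analytic obstacle here; the genuine subtleties are (i) verifying that $I_k$ converges, which requires $\alpha>2$ since the integrand decays like $k\Gamma_{cs} t^{1-\alpha}$ at infinity (consistent with the standing assumption $\alpha_N\ge\max(\alpha_L,2)$, here used strictly), with $I_0=0$ giving the expected $k=0$ term equal to $1$; and (ii) cleanly justifying the interchange of summation and integration. These are routine, so the main "work" is really recognizing the rescaling that decouples $r_1$ from the interference integral and triggers the cancellation of the density.
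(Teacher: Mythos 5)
Your proposal is correct and follows essentially the same route as the paper's proof: substitute $W=0$ and $\mathit{l}(r)=Cr^{\alpha}$ into~(\ref{ExpCSProbThmEq2}), rescale the inner PGFL integral so it factors as $r_1^2 I_k$, swap the finite $k$-sum with the outer integral, and evaluate the resulting Gaussian-type integral to get $A_j=\sum_{k=0}^{j}(-1)^k\binom{j}{k}/(1+2I_k)$. Your added observations (cancellation of $\lambda$ and $C$, convergence of $I_k$ for $\alpha>2$) are consistent with what the paper records in Remark~\ref{ExpectedCSIntLmtdRmk}.
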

\begin{proof}
	By substituting $W = 0$ and $\mathit{l}(r) = C r^\alpha$ into~(\ref{ExpCSProbThmEq2}), $A_j$ defined in~(\ref{ExpCSProbThmEq2}) can be further simplified as follows:
	\begin{align*}
	\allowdisplaybreaks
		A_j = &\int_{0}^{\infty}  \biggl\{\sum\limits_{k=0}^{j} (-1)^k {j \choose k}  \exp\biggl(-\frac{2\pi\lambda}{M} \int_{r_1}^{\infty} \bigg(1-\frac{1}{(1+\Gamma_{cs} r_1^\alpha/r^\alpha)^k}\bigg)r {\rm d}r\biggr)\biggl\} \frac{2\lambda \pi r_1}{M} \exp(-\frac{\lambda \pi r_1^2}{M}) {\rm d}r_1 \nonumber\\
		=&\sum\limits_{k=0}^{j} (-1)^k {j \choose k} \biggl\{\int_{0}^{\infty} \exp\biggl(-\frac{2\pi\lambda r_1^2}{M} \int_{1}^{\infty} \bigg(1-\frac{1}{(1+\Gamma_{cs} /r^\alpha)^k}\bigg)r {\rm d}r\biggr)\frac{2\lambda \pi r_1}{M} \exp(-\frac{\lambda \pi r_1^2}{M}) {\rm d}r_1 \biggl\}  \nonumber\\
		=& \sum\limits_{k=0}^{j} \frac{(-1)^k {j \choose k}}{1+2\int_{1}^{+\infty} (1-(1+\Gamma_{cs}/r^\alpha)^{-k})r {\rm d}r},
	\end{align*}
	which completes the proof.
\end{proof}

\begin{remark}\label{ExpectedCSIntLmtdRmk}
	We can observe from Lemma~\ref{ExpectedCSIntLmtdLemma} that $\mathbb{E}[L_{cs}(M)]$ does not depend on the BS intensity $\lambda$ under the interference limited scenario. This is because the increase and decrease of the signal power can be perfectly counter-effected by the corresponding increase and decrease of the interference power~\cite{trac}. Another immediate observation from Lemma~\ref{ExpectedCSIntLmtdLemma} is that $A_j$ is independent of the number of BS antennas $M$ for $\forall j$. Since $A_j \leq 1$ according to its definition in Theorem~\ref{ExpCSProbThm}, $\mathbb{E}[L_{cs}(M)]$ is therefore monotonically non-increasing with respect to $M$, which is a stronger observation than Lemma~\ref{BeamSweepGeneralLemma}. 
\end{remark}

\begin{remark}\label{ExpectedCSIntLmtdPLE2Rmk}
	If the path loss exponent $\alpha = 2$, it can be proved from Lemma~\ref{ExpectedCSIntLmtdLemma} that $\mathbb{E}[L_{cs}(M)] = \infty$ for $\forall M$. This is mainly because the interference power will dominate the signal power when $\alpha = 2$, so that the coverage probability is 0 for any SINR threshold $\Gamma_{cs}$. 
\end{remark}

If $\alpha>2$, we can prove that there may exist a phase transition for $\mathbb{E}[L_{cs}(M)]$ in terms of the BS beam number $M$. In order to show that, we first apply Remark~\ref{UnivUBRmk} and obtain a sufficient condition to guarantee the finiteness for $\mathbb{E}[L_{cs}(M)]$.  
\begin{lemma}\label{IntLmtdLemma2}
	Under the interference limited scenario with path loss exponent $\alpha>2$, the expected number of IA cycles to succeed cell search is such that $\mathbb{E}[L_{cs}(M)] < \infty$ if the number of BS beams is such that $M > \frac{2\Gamma_{cs}}{\alpha-2}$, where $\Gamma_{cs}$ denotes the detection threshold for a BS. In particular, when $M = 1$, i.e., the BS is omni-directional, $\mathbb{E}[L_{cs}(1)]$ is finite if and only if $\alpha > 2\Gamma_{cs} + 2$.
\end{lemma}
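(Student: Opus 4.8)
The plan is to start from the universal upper bound of Remark~\ref{UnivUBRmk}, namely $\mathbb{E}[L_{cs}(M)] \leq \mathbb{E}\bigl[1/\mathbb{E}[\hat{e}_{M}(i^*)|\Phi]\bigr]$, where $i^*$ is the index of the sector containing the globally nearest BS $x_0$. Because the path loss is non-decreasing, the nearest BS inside sector $i^*$ is exactly $x_0$, so substituting the interference-limited assumption $W=0$ and $\mathit{l}(r)=Cr^{\alpha}$ into~(\ref{CSSuccProbEq1}) and inverting gives
\begin{align}
\frac{1}{\mathbb{E}[\hat{e}_{M}(i^*)|\Phi]} = \prod_{x_j \in \Phi \cap S_M(i^*)\setminus\{x_0\}} \left(1 + \Gamma_{cs}\left(\frac{\|x_0\|}{\|x_j\|}\right)^{\alpha}\right).
\end{align}
The whole argument thus reduces to taking the expectation of this product over the interfering BSs in sector $i^*$.

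Next I would condition on $\|x_0\| = r_0$. Given that $x_0$ sits in sector $i^*$ at distance $r_0$ and is the globally nearest BS, Slivnyak's theorem together with the restriction property of the PPP shows that the remaining BSs of sector $i^*$ form a PPP of intensity $\lambda$ on the punctured sector $S_M(i^*)\cap B^c(o,r_0)$, of angular width $2\pi/M$. Applying the PGFL to the factor $f(x)=1+\Gamma_{cs}(r_0/\|x\|)^{\alpha}$ then yields
\begin{align}
\mathbb{E}\left[\frac{1}{\mathbb{E}[\hat{e}_{M}(i^*)|\Phi]}\,\Big|\,\|x_0\| = r_0\right] = \exp\left(\frac{2\pi\lambda\Gamma_{cs}r_0^{\alpha}}{M}\int_{r_0}^{\infty} r^{1-\alpha}\,{\rm d}r\right) = \exp\left(\frac{2\pi\lambda\Gamma_{cs}r_0^{2}}{M(\alpha-2)}\right),
\end{align}
where the inner integral converges precisely because $\alpha>2$.

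Then I would average against the nearest-BS distance density $2\pi\lambda r_0\exp(-\pi\lambda r_0^2)$ and substitute $u=\pi\lambda r_0^2$, which collapses the bound to
\begin{align}
\mathbb{E}[L_{cs}(M)] \leq \int_{0}^{\infty} \exp\left(-u\left(1 - \frac{2\Gamma_{cs}}{M(\alpha-2)}\right)\right){\rm d}u,
\end{align}
which is finite exactly when $1 - 2\Gamma_{cs}/(M(\alpha-2)) > 0$, i.e. $M > 2\Gamma_{cs}/(\alpha-2)$, giving the sufficiency claim. For the sharper equivalence at $M=1$, the key observation is that with a single sector the product in Theorem~\ref{IADelayThm1} reduces to its $i=1$ term, so the inequality of Remark~\ref{UnivUBRmk} holds with equality; the displayed integral is then an exact evaluation of $\mathbb{E}[L_{cs}(1)]$, whose convergence is equivalent to $\alpha > 2\Gamma_{cs}+2$, yielding the ``if and only if''.

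The step I expect to be most delicate is the conditioning/PGFL argument: one must justify carefully that, after conditioning on the global nearest BS, the interferers inside its own sector are an undisturbed PPP on the punctured sector, and that the PGFL is legitimately applied to a factor $f\geq 1$ — this is exactly where $\alpha>2$ is indispensable, as it keeps $\int_{r_0}^{\infty} r^{1-\alpha}\,{\rm d}r$ finite so that the conditional expectation exists. Once this is in place, the remaining manipulations are elementary integrals.
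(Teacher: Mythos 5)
Your proposal is correct and takes essentially the same route as the paper's own proof: both start from the upper bound of Remark~\ref{UnivUBRmk} with $W=0$ substituted into~(\ref{CSSuccProbEq1}), condition on the nearest-BS distance, apply the PGFL to the factor $1+\Gamma_{cs}\,\mathit{l}(r_0)/\mathit{l}(\|x\|)\geq 1$ (the paper justifies this extension of the PGFL in a footnote, exactly the delicate point you flag), and finish via the change of variables $u=\pi\lambda r_0^2$, yielding finiteness iff $1-2\Gamma_{cs}/(M(\alpha-2))>0$. Your observation that the bound is tight for $M=1$ because the single-sector case makes Remark~\ref{UnivUBRmk} an equality is also precisely how the paper obtains the ``if and only if'' for $\alpha>2\Gamma_{cs}+2$.
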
 

\begin{proof}
	Denote by $x_0$ the closest BS to the origin among $\Phi$, and $S_M(i^*)$ the BS sector containing $x_0$, we can obtain an upper bound to $\mathbb{E}[L_{cs}(M)]$ by substituting~(\ref{CSSuccProbEq1}) and $W = 0$ into Remark~\ref{UnivUBRmk} as follows:
	\begin{align}\label{IntLmtdLemPfEq}
	\allowdisplaybreaks
	\mathbb{E}[L_{cs}(M)] &\leq \mathbb{E} \biggl[\prod_{x_j \in \Phi \cap S_M(i^*) \setminus \{x_0\}} \biggl(1+\Gamma_{cs} \mathit{l}(\|x_0\|)/\mathit{l}(\|x_j\|)\biggl)\biggl] \nonumber\\
	&\overset{(a)}{=}\int_{0}^{\infty} \mathbb{E}\biggl[\prod_{x_j \in \Phi \cap S_M(i^*) \cap B^c(o,r_0)} \biggl(1+\Gamma_{cs} \mathit{l}(r_0)/\mathit{l}(\|x_j\|)\biggl)\biggl] 2\lambda \pi r_0 \exp(-\lambda \pi r_0^2) {\rm d}r_0 \nonumber\\
	&\overset{(b)}{=}\int_{0}^{\infty}\exp\biggl(\frac{2\pi\lambda \Gamma_{cs}}{M} \int_{r_0}^{\infty} \frac{\mathit{l}(r_0) r}{\mathit{l}(r)} {\rm d}r\biggl) 2\lambda \pi r_0 \exp(-\lambda \pi r_0^2) {\rm d}r_0 \nonumber\\
	&\overset{(c)}{=}\int_{0}^{\infty}  \exp\biggl(-\left(1-\frac{2 \Gamma_{cs} }{M (\alpha-2)}\right)v \biggl) {\rm d}v\nonumber\\
	&=\left\{
	\begin{array}{ll}
	\infty, & \text{if } M \leq \frac{2\Gamma_{cs}}{\alpha-2}, \\
	\frac{M(\alpha-2)}{M(\alpha-2)-2\Gamma_{cs}}, & \text{if } M > \frac{2\Gamma_{cs}}{\alpha-2},
	\end{array}
	\emph{ } \right.
	\end{align}
	where (a) is obtained by noting that $x_0$ is the closest BS to the origin, (b) follows from the PGFL for the PPP\footnote{Note that~\cite[Theorem 4.9]{stochgeom} does not directly apply to the PGFL calculation here since $f(x) = 1+\Gamma_{cs} \mathit{l}(r_0)/\mathit{l}(x)$ is larger than 1. However, we can use dominated convergence theorem to prove that for PPP $\Phi$ with intensity measure $\Lambda(\cdot)$, the PGFL result still holds if function $f(x)$ satisfies $f(x) \geq 1$ and $\int_{\mathbb{R}^2}(f(x) -1)\Lambda({\rm d}x) < \infty$, i.e. $\mathbb{E}[\prod_{x_i \in \Phi} f(x_i)] = \exp(\int_{\mathbb{R}^2}(f(x)-1)\Lambda({\rm d}x)$.}, and (c) is derived through change of variables (i.e. $v = \lambda \pi r_0^2$). It can be observed that~(\ref{IntLmtdLemPfEq}) is finite whenever $M > \frac{2\Gamma_{cs}}{\alpha-2}$, which is a sufficient condition for the finiteness of $\mathbb{E}[L_{cs}(M)]$. In particular, the equality holds in the first step of~(\ref{IntLmtdLemPfEq}) when $M=1$. As a result, $\mathbb{E}[L_{cs}(1)]$ is finite if and only if $\alpha > 2\Gamma_{cs} + 2$.
\end{proof}

According to Lemma~\ref{ExpectedCSIntLmtdLemma} and Lemma~\ref{IntLmtdLemma2}, the number of IA cycles to succeed in cell search (i.e., $\mathbb{E}[L_{cs}(M)]$) may have a phase transition in terms of the number of BS beams $M$, depending on the relation between the path loss exponent $\alpha$ and the detection threshold $\Gamma_{cs}$. This is detailed in the following theorem.
\begin{theorem}\label{PhaseTranIntLmtedRmk} The number of IA cycles to succeed in cell search for the interference limited networks satisfy the following:
	\begin{itemize}
		\item If $\alpha > 2 + 2\Gamma_{cs}$, $\mathbb{E}[L_{cs}(M)] < \infty$ for the omni-directional BS antenna case, i.e., $M=1$. By the monotonicity of $\mathbb{E}[L_{cs}(M)]$ with respect to $M$, $\mathbb{E}[L_{cs}(M)]$ is guaranteed to be finite for any $M \geq 1$. 
		\item If $\alpha \leq 2 + 2\Gamma_{cs}$, $\mathbb{E}[L_{cs}(M)] = \infty$ for $M=1$, and $\mathbb{E}[L_{cs}(M)] < \infty$ if $M > \frac{2\Gamma_{cs}}{\alpha-2}$. Therefore, according to the monotonicity of $\mathbb{E}[L_{cs}(M)]$, there exists a phase transition at $M^* \in [2,\frac{2\Gamma_{cs}}{\alpha-2}]$, such that $\mathbb{E}[L_{cs}(M)] = \infty$ for $M \leq M^*$, and $\mathbb{E}[L_{cs}(M)] < \infty$ for $M > M^*$. In particular,  $\mathbb{E}[L_{cs}(M)] = \infty$ for $\forall M$ if  $\alpha = 2$, which means $M^* = \infty$.
	\end{itemize}
\end{theorem}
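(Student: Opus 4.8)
The plan is to assemble this theorem entirely from facts already established in this subsection, so that no new integral computation is required. The three ingredients are: the monotonicity of $\mathbb{E}[L_{cs}(M)]$ in $M$ coming from Remark~\ref{ExpectedCSIntLmtdRmk} (since $\mathbb{E}[L_{cs}(M)] = \sum_{j} A_j^M$ with each $A_j$ independent of $M$ and satisfying $0 \le A_j \le 1$); the exact $M=1$ characterization together with the sufficient finiteness condition $M > \frac{2\Gamma_{cs}}{\alpha-2}$ from Lemma~\ref{IntLmtdLemma2}; and the degenerate behavior at $\alpha=2$ from Remark~\ref{ExpectedCSIntLmtdPLE2Rmk}. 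The real work is only in combining these and checking the boundary cases.

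For the first bullet, assume $\alpha > 2 + 2\Gamma_{cs}$. The ``if and only if'' clause of Lemma~\ref{IntLmtdLemma2} states that $\mathbb{E}[L_{cs}(1)] < \infty$ exactly when $\alpha > 2\Gamma_{cs} + 2$, which holds here, so $\mathbb{E}[L_{cs}(1)] < \infty$. Then, since $A_j \le 1$ is independent of $M$, the term $A_j^M$ is non-increasing in $M$ for every $j$, hence $\mathbb{E}[L_{cs}(M)] \le \mathbb{E}[L_{cs}(1)] < \infty$ for all $M \ge 1$. This disposes of the first bullet.

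For the second bullet with $2 < \alpha \le 2 + 2\Gamma_{cs}$, the same clause of Lemma~\ref{IntLmtdLemma2} now gives $\mathbb{E}[L_{cs}(1)] = \infty$, while its sufficient condition gives $\mathbb{E}[L_{cs}(M)] < \infty$ whenever $M > \frac{2\Gamma_{cs}}{\alpha-2}$ (and $\frac{2\Gamma_{cs}}{\alpha-2} \ge 1$ precisely because $\alpha \le 2 + 2\Gamma_{cs}$). To turn these into a phase transition I would invoke monotonicity once more: non-increasingness implies that if $\mathbb{E}[L_{cs}(M)]$ is finite at some $M$ it is finite at all larger $M$, so the set $\{M \in \mathbb{N}^{+} : \mathbb{E}[L_{cs}(M)] = \infty\}$ is an initial segment of the integers. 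Since it contains $M=1$ and is bounded above by $\frac{2\Gamma_{cs}}{\alpha-2}$, I define $M^* = \max\{M : \mathbb{E}[L_{cs}(M)] = \infty\}$, a well-defined finite integer, so that $\mathbb{E}[L_{cs}(M)] = \infty$ for all $M \le M^*$ and $\mathbb{E}[L_{cs}(M)] < \infty$ for all $M > M^*$, with $M^* \le \frac{2\Gamma_{cs}}{\alpha-2}$. The boundary case $\alpha = 2$ is immediate from Remark~\ref{ExpectedCSIntLmtdPLE2Rmk}, which forces $\mathbb{E}[L_{cs}(M)] = \infty$ for every $M$, i.e. $M^* = \infty$.

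The only delicate point, which I expect to be the main obstacle, is pinning down the exact interval for $M^*$. Monotonicity together with $\mathbb{E}[L_{cs}(1)] = \infty$ rigorously yields only the lower bound $M^* \ge 1$, and in the sub-range $2 + \Gamma_{cs} < \alpha \le 2 + 2\Gamma_{cs}$ one has $\frac{2\Gamma_{cs}}{\alpha-2} < 2$, so Lemma~\ref{IntLmtdLemma2} already makes $\mathbb{E}[L_{cs}(2)]$ finite and forces $M^* = 1$. Hence the claimed lower endpoint $M^* \ge 2$ is not a consequence of monotonicity alone; I would either state the transition interval conservatively as $[1, \frac{2\Gamma_{cs}}{\alpha-2}]$, or supply a dedicated divergence estimate for $\mathbb{E}[L_{cs}(2)]$ valid in the complementary range $\alpha \le 2 + \Gamma_{cs}$ to justify $M^* \ge 2$ there. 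Everything else follows mechanically from the cited results.
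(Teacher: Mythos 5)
Your proposal is correct and is essentially the paper's own proof: the paper gives Theorem~\ref{PhaseTranIntLmtedRmk} no separate argument, presenting it exactly as the assembly you perform --- monotonicity of $\mathbb{E}[L_{cs}(M)] = \sum_j A_j^M$ in $M$ from Remark~\ref{ExpectedCSIntLmtdRmk} (each $A_j \in [0,1]$ independent of $M$), the $M=1$ if-and-only-if characterization and the sufficient condition $M > \frac{2\Gamma_{cs}}{\alpha-2}$ from Lemma~\ref{IntLmtdLemma2}, and the $\alpha = 2$ degeneracy from Remark~\ref{ExpectedCSIntLmtdPLE2Rmk}.

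The ``delicate point'' you flag is real, and it is a defect in the theorem's statement rather than in your argument. Nothing in the paper establishes $\mathbb{E}[L_{cs}(2)] = \infty$, so under the theorem's own convention ($\mathbb{E}[L_{cs}(M)] = \infty$ for $M \leq M^*$) the lower endpoint $2$ in $M^* \in [2, \frac{2\Gamma_{cs}}{\alpha-2}]$ is unsupported by the cited results; worse, in the sub-range $2 + \Gamma_{cs} < \alpha \leq 2 + 2\Gamma_{cs}$ one has $\frac{2\Gamma_{cs}}{\alpha-2} < 2$, so the stated interval is empty while Lemma~\ref{IntLmtdLemma2} already forces $\mathbb{E}[L_{cs}(2)] < \infty$ and hence $M^* = 1$, exactly as you compute. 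Your conservative restatement $M^* \in \left[1, \frac{2\Gamma_{cs}}{\alpha-2}\right]$ is what the paper's lemmas rigorously deliver (your observation that the set of $M$ with infinite mean is an initial segment of the integers, so $M^* = \max\{M : \mathbb{E}[L_{cs}(M)] = \infty\}$ is well defined, is the clean way to phrase the transition); proving $M^* \geq 2$ for $\alpha \leq 2 + \Gamma_{cs}$ would indeed require the dedicated divergence estimate for $M = 2$ that you mention and that the paper does not supply.
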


The path loss exponent $\alpha$ depends on the propagation environment, and $\alpha=2$ corresponds to a free space LOS scenario; while $\alpha$ increases as the environment becomes relatively more lossy and scatter-rich, such as urban and suburban areas. In addition, the SINR detection threshold $\Gamma_{cs}$ depends on the receiver decoding capability, which is typically within $-10$ dB and $0$ dB~\cite{barati2015directional}. Theorem~\ref{PhaseTranIntLmtedRmk} shows that in a lossy environment with $\alpha > 2+2\Gamma_{cs}$, the typical user can detect a nearby BS in a finite number of IA cycles on average. This is mainly because the relative strength of the useful signal with respect to the interfering signals is strong enough. However, when $\alpha \leq 2+2\Gamma_{cs}$, $\mathbb{E}[L_{cs}(M)]$ could be infinite due to the significant fraction of cell edge users that have poor SIR coverage and therefore require a very high number of IA cycles to succeed in cell search. Specifically, when $M$ is very small (e.g., $M=1$), the edge user is subject to many strong nearby interferers inside every BS sector, so that the corresponding cell search delay averaged over all users becomes
infinity. However, as $M$ increases, the BS beam sweeping will create enough angular separation so that the nearby BSs to the edge user could locate in different BS sectors. As a result, $L_{cs}(M)$ is significantly decreased for cell edge users as $M$ increases, and therefore the phase transition for $\mathbb{E}[L_{cs}(M)]$ happens. 

In summary, for an interference-limited network, we can always ensure the network to be in a desirable condition with finite mean cell search delay by tuning the number of BS beams/antennas $M$ appropriately. 

\subsection{Cell Search Delay Distribution in Poisson Networks with Rayleigh Fading}
The previous part is mainly focused on the mean number of IA cycles to succeed in cell search $\mathbb{E}[L_{cs}(M,\lambda)]$, or equivalently the mean cell search delay. However, as shown in Theorem~\ref{NoiseLimLemma1}, Theorem~\ref{NoiseLmdPLE2Lem} and Theorem~\ref{PhaseTranIntLmtedRmk}, $\mathbb{E}[L_{cs}(M,\lambda)]$ could be infinite under various settings, and there are large variations of the performance between cell edge user and cell center user. Therefore, it is also important to analyze the cell search delay distribution for system design. 

Since the cell search delay $D_{cs}(M,\lambda)$ depends on the spatial point process model for BSs and the fading random variables at each IA cycle, its distribution is intractable in general. In this section, we evaluate the distribution of the conditional mean cell search delay given the distance from the typical user to its closest BS $R_0$, which is a random variable with PDF $f_{R_0}(r_0) = 2 \pi \lambda r_0\exp(-\lambda \pi r_0^2)$. Specifically, 
we first derive the expected number of IA cycles to succeed in cell search given $R_0$, i.e., $\mathbb{E}[L_{cs}(M,\lambda)|R_0]$, which is a function of random variable $R_0$ with mean $\mathbb{E}[L_{cs}(M,\lambda)]$. For notation simplicity, we denote by $L_{cs}(R_0,M,\lambda) \triangleq \mathbb{E}[L_{cs}(M,\lambda) |R_0]$ for the rest of the paper. According to~(\ref{CSandIADelayDefnEq}), we will evaluate the distribution of the following conditional mean cell search delay: 
\begin{align}\label{CondiCSDelayDefnEq}
D_{cs}(R_0,M,\lambda) \triangleq (L_{cs}(R_0,M,\lambda)-1) T + M \tau.
\end{align}The main reason to investigate the cell search delay conditionally on $R_0$ is because $R_0$ captures the location and therefore the signal quality of the typical user. In particular, $R_0 \ll \frac{1}{2\sqrt{\lambda}}$ corresponds to the cell center user, while $R_0 \gg \frac{1}{2\sqrt{\lambda}}$ corresponds to the cell edge user, where $\frac{1}{2\sqrt{\lambda}}$ represents the mean distance from the typical user to its nearest BS on the PPP $\Phi$. 

In order to derive $L_{cs}(R_0,M,\lambda)$ in~(\ref{CondiCSDelayDefnEq}), we will first derive $\mathbb{E}[L_{cs}(M,\lambda)|R_1,R_2,...,R_M]$, where $R_i$ denotes the distance from the typical user to its closest BS in the $i$-th BS sector (i.e., $R_i = \|x_0^i\|$) for $1 \leq i \leq M$. 
\begin{lemma}\label{MeanCSGivenAllDistsLemma}
	Given the distances from the typical user to its nearest BSs inside every BS sector $R_1, ...,R_M$, the mean number of IA cycles for cell search is:
	\begin{align}\label{MeanCSGivenAllDistsEq1}
	\mathbb{E}[L_{cs}(M,\lambda)|R_1,R_2,...,R_M] = \sum_{j=0}^{\infty} \prod_{i=1}^{M} f_j(R_i,M,\lambda),
	\end{align}
	where $f_j(R_i)$ denotes the probability that $x_0^i$ is detected in the first $j$ IA cycles, which is: 
	\begin{align*}
	\!\!\!\!\!\!f_j(R_i,M,\lambda) = \sum_{k=0}^{j} (-1)^k {j \choose k} \exp\left(- \frac{Wk\Gamma_{cs} \mathit{l}(R_i)}{PM}\right) \exp\left(-\frac{2\lambda \pi}{M} \int_{R_i}^{\infty} (1-\frac{1}{(1+\Gamma_{cs}\mathit{l}(R_i)/\mathit{l}(r))^k})r{\rm d}r\right).
	\end{align*}
\end{lemma}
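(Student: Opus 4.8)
The plan is to build directly on the geometric-series argument already used to prove Theorem~\ref{ExpCSProbThm}, but to stop the averaging at the level of the per-sector nearest-BS distances $R_1,\dots,R_M$ rather than integrating them out. First I would start from the conditional mean in~(\ref{IADelayCondEq1}) and expand $\frac{1}{1-x}=\sum_{j\ge 0}x^{j}$ with $x=\prod_{i=1}^{M}[1-F_M(i,\Phi)]$, which gives
\begin{align}
\mathbb{E}[L_{cs}(M,\lambda)\,|\,\Phi]=\sum_{j=0}^{\infty}\prod_{i=1}^{M}\bigl(1-F_M(i,\Phi)\bigr)^{j},
\end{align}
exactly as in steps (a)--(c) of the proof of Theorem~\ref{ExpCSProbThm}. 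I would then take the conditional expectation given $R_1,\dots,R_M$ and invoke the monotone convergence theorem (each summand is nonnegative and bounded by $1$) to interchange the expectation with the infinite sum.

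The key structural step is a factorization across BS sectors. Because $\Phi$ is a PPP and the sectors $S_M(1),\dots,S_M(M)$ are pairwise disjoint, the restrictions $\Phi\cap S_M(i)$ are mutually independent; moreover, owing to the $0$ side-lobe antenna pattern, the detection term $F_M(i,\Phi)$ in~(\ref{CSSuccProbEq1}) depends only on the points of $\Phi$ lying in sector $i$. Conditioning on $R_i=\|x_0^i\|$ fixes the nearest point of sector $i$ and leaves an independent configuration in each sector, so these conditional laws remain independent across $i$. Consequently
\begin{align}
\mathbb{E}\Bigl[\prod_{i=1}^{M}\bigl(1-F_M(i,\Phi)\bigr)^{j}\,\Big|\,R_1,\dots,R_M\Bigr]=\prod_{i=1}^{M}\mathbb{E}\Bigl[\bigl(1-F_M(i,\Phi)\bigr)^{j}\,\Big|\,R_i\Bigr],
\end{align}
which is precisely the product form asserted in~(\ref{MeanCSGivenAllDistsEq1}) once I identify $f_j(R_i,M,\lambda)=\mathbb{E}[(1-F_M(i,\Phi))^{j}\,|\,R_i]$, i.e.\ the conditional probability that sector $i$ remains undetected throughout the first $j$ cycles (consistent with the tail-sum reading $\mathbb{E}[L_{cs}]=\sum_{j\ge0}\mathbb{P}(L_{cs}>j)$ noted after Theorem~\ref{ExpCSProbThm}).

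It then remains to evaluate each per-sector factor. Expanding $(1-F_M(i,\Phi))^{j}=\sum_{k=0}^{j}(-1)^k\binom{j}{k}F_M(i,\Phi)^{k}$ and inserting the closed form of $F_M(i,\Phi)$ from~(\ref{CSSuccProbEq1}), the noise term contributes the factor $\exp(-kW\Gamma_{cs}\mathit{l}(R_i)/PM)$, while the product over the interfering BSs in sector $i$ is averaged using Slivnyak's theorem and the PGFL of the PPP restricted to $S_M(i)\cap B^c(o,R_i)$, exactly as in~(\ref{CSProbProofEq2}) and its subsequent PGFL step. This reproduces $\exp(-\tfrac{2\lambda\pi}{M}\int_{R_i}^{\infty}(1-(1+\Gamma_{cs}\mathit{l}(R_i)/\mathit{l}(r))^{-k})r\,{\rm d}r)$ and yields the stated expression for $f_j(R_i,M,\lambda)$, completing the proof. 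The main obstacle is not the computation---which merely recycles the PGFL argument of Theorem~\ref{ExpCSProbThm}---but the careful justification of the cross-sector factorization: one must argue that conditioning on the vector $(R_1,\dots,R_M)$ preserves the independence of the reduced point processes on the disjoint sectors, and that $F_M(i,\Phi)$ is measurable with respect to sector $i$ alone. Once that independence is established, the product over $i$ and the sum over $j$ decouple cleanly.
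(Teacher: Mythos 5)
Your proposal is correct and follows essentially the same route as the paper, which proves the lemma via the tower property $\mathbb{E}[L_{cs}(M,\lambda)|R_1,\dots,R_M]=\mathbb{E}[\mathbb{E}[L_{cs}(M,\lambda)|\Phi]\,|\,R_1,\dots,R_M]$ and then reuses the geometric-series, binomial-expansion, Slivnyak and PGFL steps of Theorem~\ref{ExpCSProbThm}. The paper omits the details you spell out; in particular, your explicit justification of the cross-sector factorization (each $R_i$ and $F_M(i,\Phi)$ being measurable with respect to $\Phi\cap S_M(i)$ on disjoint sectors of a PPP) is exactly the implicit step behind the paper's ``steps similar to Theorem~\ref{ExpCSProbThm}'' remark.
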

\begin{proof}
	We can first prove $\mathbb{E}[L_{cs}(M,\lambda)|R_1,R_2,...,R_M] = \mathbb{E}[\mathbb{E}[L_{cs}(M,\lambda)| \Phi] | R_1,R_2,...,R_M]$, which is due to the tower property for conditional expectations. The rest of the proof follows steps similar to those of Theorem~\ref{ExpCSProbThm}, and therefore we omit the details. 
\end{proof}

Next we prove the following corollary to derive $L_{cs}(R_0,M,\lambda)$ from $\mathbb{E}[L_{cs}(M,\lambda)|R_1,R_2,...,R_M]$. 
\begin{corollary}\label{CoroExpMRVs}
	For all i.i.d. non-negative random variables $R_1$, $R_2$,...,$R_M$ with CCDF $G(r)$, and all functions $F: [0,\infty)^M \rightarrow [0,\infty)$ which are symmetric, the following relation holds true:
	\begin{align}\label{CoroExpMRVsEq}
	\mathbb{E}[F(R_1,R_2,...,R_M) | \min(R_1,R_2,...,R_M) = r] = \frac{\mathbb{E}[F(r,R_2,...,R_M)\mathbbm{1}_{\{R_j > r, \forall j \neq 1\}}]}{(G(r))^{M-1}}.
	\end{align}
\end{corollary}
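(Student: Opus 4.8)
The plan is to characterize the conditional expectation $h(r) \triangleq \mathbb{E}[F(R_1,\ldots,R_M)\,|\,\min(R_1,\ldots,R_M) = r]$ through its defining integral identity against the law of the minimum, and then to exploit the symmetry of $F$ together with the i.i.d.\ structure of the $R_i$'s to reduce every contribution to the single event in which $R_1$ realizes the minimum. Throughout I write $R_{(1)} = \min(R_1,\ldots,R_M)$ and let $f$ denote the common density of the $R_i$'s, so that $G$ is its CCDF and, by the standard order-statistics computation $\mathbb{P}(R_{(1)} > r) = (G(r))^M$, the density of $R_{(1)}$ equals $M (G(r))^{M-1} f(r)$.

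First I would recall that $h$ is determined by requiring $\int_A h(r)\, M (G(r))^{M-1} f(r)\,{\rm d}r = \mathbb{E}[F\,\mathbbm{1}_{\{R_{(1)} \in A\}}]$ for every Borel set $A \subseteq [0,\infty)$. The central step is then to expand the right-hand side according to which index attains the minimum:
\begin{align*}
\mathbb{E}[F\,\mathbbm{1}_{\{R_{(1)} \in A\}}] = \sum_{i=1}^{M} \mathbb{E}\Big[F(R_1,\ldots,R_M)\,\mathbbm{1}_{\{R_i \in A\}}\,\mathbbm{1}_{\{R_j > R_i,\ \forall j \neq i\}}\Big],
\end{align*}
where discarding the ties $\{R_i = R_j\}$ is justified because the $R_i$'s are continuous. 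Since $F$ is symmetric and the $R_i$'s are i.i.d., each of the $M$ summands equals the $i=1$ term, yielding a factor $M$ multiplying $\mathbb{E}[F(R_1,\ldots,R_M)\,\mathbbm{1}_{\{R_1 \in A\}}\,\mathbbm{1}_{\{R_j > R_1,\ \forall j \neq 1\}}]$.

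Next I would condition on $R_1 = r$ and use the independence of $R_1$ from $(R_2,\ldots,R_M)$ to rewrite this term as $M \int_A \mathbb{E}[F(r, R_2,\ldots,R_M)\,\mathbbm{1}_{\{R_j > r,\ \forall j \neq 1\}}]\, f(r)\,{\rm d}r$. Matching this expression against the defining identity for $h$ and cancelling the common factor $M f(r)$ forces the integrands to coincide for a.e.\ $r$, namely $h(r)\,(G(r))^{M-1} = \mathbb{E}[F(r, R_2,\ldots,R_M)\,\mathbbm{1}_{\{R_j > r,\ \forall j \neq 1\}}]$, which is exactly~(\ref{CoroExpMRVsEq}) after dividing by $(G(r))^{M-1}$. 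Note that $(G(r))^{M-1}$ is precisely $\mathbb{P}(R_j > r,\ \forall j \neq 1)$, so the quotient also admits the intuitive reading that, given $R_{(1)} = r$, the minimizer is uniform over the $M$ indices and, conditionally on $R_1 = r$ being the minimizer, the remaining variables are i.i.d.\ copies of $R$ restricted to $(r,\infty)$.

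The main obstacle is the rigorous treatment of the conditioning, since $\{R_{(1)} = r\}$ is a null event: the matching-of-integrands argument must be run against all Borel sets $A$, and the conclusion holds only for almost every $r$ with $G(r) > 0$, which is exactly the regime in which the stated quotient is well defined (and in the application $G(r) = \exp(-\lambda \pi r^2/M) > 0$ everywhere). A secondary technical point is the elimination of ties in the index decomposition; this is harmless here because the $R_i$'s are absolutely continuous, but the general-CCDF version would require the absolute continuity of $G$ to be invoked explicitly, or else a Stieltjes formulation in place of the density $f$.
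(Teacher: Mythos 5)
Your proof is correct, and it shares the paper's central combinatorial step --- decomposing according to which index attains the minimum, discarding ties, and then using the symmetry of $F$ together with the i.i.d.\ (exchangeable) structure to collapse the $M$ summands to the $i=1$ term --- but you formalize the conditioning on the null event $\{R_{(1)} = r\}$ by a genuinely different mechanism. The paper computes $\mathbb{E}[F \mid R_0 = r]$ as a limit of ratios over shrinking windows $\{|R_0 - r| < \epsilon\}$, splits numerator and denominator by the minimizing index, divides through by the common factor $\mathbb{P}(|R_i - r| < \epsilon)$, and then passes $\epsilon \to 0$ inside each conditional expectation; this is quick, but the interchange of limit and expectation (and the existence of the limits themselves) is left unjustified. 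You instead invoke the defining a.e.\ identity of conditional expectation against the law of the minimum: you compute the density $M(G(r))^{M-1}f(r)$ of $R_{(1)}$, decompose $\mathbb{E}[F\,\mathbbm{1}_{\{R_{(1)}\in A\}}]$, apply Tonelli with the independence of $R_1$ from $(R_2,\ldots,R_M)$, and match integrands over all Borel $A$. This yields the identity rigorously, for $f$-a.e.\ $r$ with $G(r)>0$, at the price of assuming the $R_i$ admit a density --- an assumption the paper's statement does not make explicit but which both arguments effectively need (the paper's limit computation likewise requires a continuous law so that ties are null and the window probabilities behave), and which holds in the corollary's only application, where $R_i = \|x_0^i\|$ has density $\frac{2\lambda\pi r}{M}\exp(-\lambda\pi r^2/M)$ and $G(r) = \exp(-\lambda\pi r^2/M)>0$ for all $r$. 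Your closing caveats --- that the conclusion is an almost-everywhere statement on $\{G>0\}$, that $(G(r))^{M-1} = \mathbb{P}(R_j > r,\ \forall j \neq 1)$ explains the normalization, and that a Stieltjes formulation would be needed for a general CCDF --- are accurate and in fact tidy up points the paper's own proof glosses over; what the paper's route buys in return is brevity and no explicit density hypothesis in the write-up.
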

\begin{proof}
	Denote by $R_0 = \min(R_1,R_2,...,R_M)$, then we can obtain~(\ref{CoroExpMRVsEq}) as follows:
	\allowdisplaybreaks
	\begin{align*}
	\allowdisplaybreaks
	&\mathbb{E}[F(R_1,R_2,...,R_M) | R_0 = r] \\
	=&\lim\limits_{\epsilon \rightarrow 0}\frac{\mathbb{E}[F(R_1,R_2,...,R_M) \times \mathbbm{1}_{|R_0-r|<\epsilon}]}{\mathbb{P}(|R_0-r| < \epsilon)}\\
	=&\lim\limits_{\epsilon \rightarrow 0} \frac{\sum_{i=1}^{M} \mathbb{E}[F(R_1,R_2,...,R_M) \times \mathbbm{1}_{(\{|R_i-r|<\epsilon\}\cap \{R_j > R_i, \forall j \neq i \})}]}{\sum_{k=1}^{M}\mathbb{P}(\{|R_k-r|<\epsilon\}\cap \{R_j > R_k, \forall j \neq k \})}\\
	=&\lim\limits_{\epsilon \rightarrow 0} \frac{\sum_{i=1}^{M} \mathbb{E}[F(R_1,R_2,...,R_M) \mathbbm{1}_{(\{R_j > R_i, \forall j \neq i \})} | |R_i-r|<\epsilon]}{\sum_{k=1}^{M}\mathbb{P}(\{R_j > R_k, \forall j \neq k \} | |R_k-r|<\epsilon)}\\
	=& \frac{\sum_{i=1}^{M} \mathbb{E}[F(R_1,R_2,...,R_M) \mathbbm{1}_{(\{R_j > R_i, \forall j \neq i \})} | R_i=r]}{\sum_{k=1}^{M}\mathbb{P}(\{R_j > R_k, \forall j \neq k \} | R_k=r)},
	\end{align*}
	the proof is completed by noting $F$ is symmetric. 
\end{proof}

By taking $F(R_1,R_2,...,R_M) = \mathbb{E}[L_{cs}(M,\lambda)|R_1,R_2,...,R_M]$ in Corollary~\ref{CoroExpMRVs}, $\mathbb{E}[L_{cs}(M,\lambda)|R_0]$ can directly obtained as follows.
\begin{lemma}\label{MeanCSGivenNNDistsLemma}
	Given the distance from the typical user to the nearest BS $R_0$, the mean number of IA cycles to succeed cell search is:
	\begin{align*}
	\!\!\!\!L_{cs}(R_0,M,\lambda)= \sum_{j=0}^{\infty} f_j(R_0,M,\lambda) \biggl\{\int_{R_0}^{\infty} f_j(r,M,\lambda) \frac{\lambda 2\pi r}{M} \exp(-\frac{\lambda \pi r^2}{M}) {\rm d}r\biggl\}^{M-1}\!\!\!\!\!\!\exp\biggl(\frac{\lambda\pi(M-1)R_0^2}{M}\biggl),
	\end{align*}
	where the function $f_j(r,M,\lambda)$ is defined in Lemma~\ref{MeanCSGivenAllDistsLemma}. 
\end{lemma}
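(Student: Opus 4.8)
The plan is to obtain $L_{cs}(R_0,M,\lambda) = \mathbb{E}[L_{cs}(M,\lambda)\mid R_0]$ by feeding the expression from Lemma~\ref{MeanCSGivenAllDistsLemma} into Corollary~\ref{CoroExpMRVs}. Concretely, I would set
\begin{align*}
F(R_1,\dots,R_M) = \mathbb{E}[L_{cs}(M,\lambda)\mid R_1,\dots,R_M] = \sum_{j=0}^{\infty}\prod_{i=1}^{M} f_j(R_i,M,\lambda),
\end{align*}
and first verify the two hypotheses of the corollary. Symmetry of $F$ is immediate from the product-over-$i$ form. For the i.i.d. claim, I would argue that because the BS sectors $S_M(i)$ are disjoint and $\Phi$ is a PPP, the nearest-BS distances $R_i=\|x_0^i\|$ are independent across $i$, and by rotational symmetry identically distributed with the common CCDF $G(r)=\mathbb{P}(\|x_0^i\|\ge r)=\exp(-\lambda\pi r^2/M)$ already computed in~(\ref{PLDistCDFEq}). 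Since $R_0=\min_i R_i$, the corollary then applies with this $G$.

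Next I would substitute the series into the right-hand side of~(\ref{CoroExpMRVsEq}). Because $F$ factors across the $R_i$ and the $R_i$ are independent, I expect
\begin{align*}
\mathbb{E}\Big[F(r,R_2,\dots,R_M)\,\mathbbm{1}_{\{R_j>r,\,\forall j\ne 1\}}\Big] = \sum_{j=0}^{\infty} f_j(r,M,\lambda)\,\Big(\mathbb{E}\big[f_j(R_2,M,\lambda)\,\mathbbm{1}_{\{R_2>r\}}\big]\Big)^{M-1},
\end{align*}
where the single-variable expectation is evaluated against the PDF in~(\ref{PLDistPDFEq}), giving $\int_{r}^{\infty} f_j(s,M,\lambda)\,\frac{2\lambda\pi s}{M}\exp(-\lambda\pi s^2/M)\,{\rm d}s$. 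Dividing by $(G(r))^{M-1}=\exp(-\lambda\pi(M-1)r^2/M)$ as dictated by the corollary produces the factor $\exp(\lambda\pi(M-1)r^2/M)$, and identifying $r$ with $R_0$ yields exactly the claimed expression.

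The one step that needs care—and the main obstacle—is justifying the interchange of the infinite sum over $j$ with the expectation (and with the product over $i$) at the two places above. Since every $f_j(\cdot,M,\lambda)\in[0,1]$ and all terms are nonnegative, I would invoke Tonelli's theorem to pull the sum outside the expectation and to split the expectation of the product into a product of expectations using independence; the same nonnegativity also legitimizes the conditioning and limiting manipulations that underlie Corollary~\ref{CoroExpMRVs}. Once these measure-theoretic interchanges are secured, the rest is the elementary bookkeeping described above, with no further estimates required.
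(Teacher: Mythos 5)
Your proposal is correct and takes essentially the same route as the paper, which likewise obtains the result by plugging $F(R_1,\dots,R_M)=\mathbb{E}[L_{cs}(M,\lambda)\mid R_1,\dots,R_M]$ from Lemma~\ref{MeanCSGivenAllDistsLemma} into Corollary~\ref{CoroExpMRVs}, using that the sector nearest-BS distances are i.i.d.\ with CCDF $G(r)=\exp(-\lambda\pi r^2/M)$ from~(\ref{PLDistCDFEq}) and PDF~(\ref{PLDistPDFEq}). Your explicit Tonelli/independence justifications for interchanging the sum over $j$ with the expectation and factoring the product over sectors merely spell out steps the paper leaves implicit.
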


Lemma~\ref{MeanCSGivenNNDistsLemma} provides a method to evaluate the cell search delay distribution under a general setting. For noise limited networks and interference limited networks, we can obtain the following simplified results.
\begin{corollary}\label{CSDelayDistNoiseLmtedCoro}
For the noise limited network, $L_{cs}(R_0,M,\lambda)$ is given by:
\allowdisplaybreaks
\begin{align}
\allowdisplaybreaks
\!\!\!\!L_{cs}(R_0,M,\lambda)=
\left\{
\begin{array}{ll}
\sum_{j=0}^{\infty} (1-\exp(-\frac{\Gamma_{cs}WC_N R_0^{\alpha_N}}{PM}))^j \{\int_{R_0}^{\infty}(1-\exp(-\frac{\Gamma_{cs}WC_N r^{\alpha_N}}{PM}))^j\\
\times\frac{\lambda 2\pi r}{M} \exp(-\frac{\lambda \pi r^2}{M}) {\rm d}r\}^{M-1} \exp(\lambda \pi \frac{M-1}{M}R_0^2), &\text{if } R_0\geq R_c, \\
\sum_{j=0}^{\infty} (1-\exp(-\frac{\Gamma_{cs}WC_L R_0^{\alpha_L}}{PM}))^j \{\int_{R_C}^{\infty}(1-\exp(-\frac{\Gamma_{cs}WC_N r^{\alpha_N}}{PM}))^j\\
\times\frac{\lambda 2\pi r}{M} \exp(-\frac{\lambda \pi r^2}{M}) {\rm d}r + \int_{R_0}^{R_C}(1-\exp(-\frac{\Gamma_{cs}WC_L r^{\alpha_L}}{PM}))^j\\
\times\frac{\lambda 2\pi r}{M} \exp(-\frac{\lambda \pi r^2}{M}) {\rm d}r\}^{M-1} \exp(\lambda \pi \frac{M-1}{M}R_0^2), &\text{if } R_0< R_c.
\end{array}
\emph{ } \right.
\end{align}
\end{corollary}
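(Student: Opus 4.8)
The plan is to obtain Corollary~\ref{CSDelayDistNoiseLmtedCoro} as a direct specialization of Lemma~\ref{MeanCSGivenNNDistsLemma} to the noise-limited regime, where the interference power vanishes. The only ingredient of Lemma~\ref{MeanCSGivenNNDistsLemma} that must be reworked is the function $f_j(r,M,\lambda)$; once it is simplified, substituting the dual-slope path loss~(\ref{DualSlopePLEq}) and splitting the radial integral at the critical distance $R_c$ yields the two branches of the claimed expression.

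First I would simplify $f_j$. Under the noise-limited assumption the interference contribution is identically zero, so in the definition of $f_j$ from Lemma~\ref{MeanCSGivenAllDistsLemma} the factor $\exp\left(-\frac{2\lambda\pi}{M}\int_{r}^{\infty}(1-(1+\Gamma_{cs}\mathit{l}(r)/\mathit{l}(\rho))^{-k})\rho\,{\rm d}\rho\right)$ reduces to $1$ for every $k$. The remaining alternating sum then collapses by the binomial theorem:
\begin{align}
f_j(r,M,\lambda)=\sum_{k=0}^{j}(-1)^k\binom{j}{k}\exp\left(-\frac{Wk\Gamma_{cs}\mathit{l}(r)}{PM}\right)=\left(1-\exp\left(-\frac{W\Gamma_{cs}\mathit{l}(r)}{PM}\right)\right)^{j}.
\end{align}
This also admits a direct interpretation: without interference, the closest BS in a sector at distance $r$ is detected in a single cycle with the deterministic probability $\exp(-W\Gamma_{cs}\mathit{l}(r)/(PM))$, so the probability of missing it across $j$ i.i.d. cycles is exactly $(1-\exp(-W\Gamma_{cs}\mathit{l}(r)/(PM)))^{j}$.

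It then remains to insert the dual-slope path loss into the leading factor $f_j(R_0,M,\lambda)$ and into the inner integral $\int_{R_0}^{\infty}f_j(r,M,\lambda)\frac{\lambda 2\pi r}{M}\exp(-\lambda\pi r^2/M)\,{\rm d}r$ of Lemma~\ref{MeanCSGivenNNDistsLemma}. When $R_0\geq R_c$ the nearest BS and the entire integration range lie in the NLOS region, so $\mathit{l}(\cdot)=C_N(\cdot)^{\alpha_N}$ throughout, which produces the first branch verbatim. When $R_0<R_c$ the leading factor uses $\mathit{l}(R_0)=C_L R_0^{\alpha_L}$, while the radial integral must be split at $R_c$: on $[R_0,R_c)$ one applies $\mathit{l}(r)=C_L r^{\alpha_L}$ and on $[R_c,\infty)$ one applies $\mathit{l}(r)=C_N r^{\alpha_N}$, giving the two-integral form of the second branch.

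The main obstacle is purely bookkeeping rather than analytical: one must carefully track where the argument of $\mathit{l}(\cdot)$ crosses $R_c$ in both the leading factor and the integrand, and split the integral consistently so that the LOS and NLOS exponents are applied on the correct subintervals. Since the exponential-times-prefactor structure of Lemma~\ref{MeanCSGivenNNDistsLemma} is preserved exactly, no further estimation is needed, and the case analysis completes the proof.
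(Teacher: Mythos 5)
Your proposal is correct and takes essentially the same route as the paper, whose proof is the one-line observation that the corollary follows from Lemma~\ref{MeanCSGivenNNDistsLemma} once the interference power is set to zero. Your binomial collapse of $f_j(r,M,\lambda)$ to $\bigl(1-\exp(-W\Gamma_{cs}\mathit{l}(r)/(PM))\bigr)^{j}$ and the splitting of the radial integral at $R_c$ under the dual-slope model are precisely the routine details the paper leaves implicit.
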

Corollary~\ref{CSDelayDistNoiseLmtedCoro} can be easily proved from Lemma~\ref{MeanCSGivenNNDistsLemma} and the fact that interference power is 0. 

\begin{corollary}\label{CSDelayDistIntLmtedCoro}
	For the interference limited network and the standard single-slope path loss model with path loss exponent $\alpha >2$, $L_{cs}(R_0,M,\lambda)$ is given by:
	\allowdisplaybreaks
	\begin{align}
	\allowdisplaybreaks
	L_{cs}(R_0,M,\lambda) =& \sum_{j=0}^{\infty} \biggl\{\sum_{k=0}^{j}(-1)^k {j \choose k} \exp\biggl(-\frac{2\pi\lambda R_0^2 H(k,\alpha,\Gamma_{cs})}{M}\biggl) \biggl\} \nonumber\\
	&\times \biggl\{\sum_{k=0}^{j} \frac{(-1)^k{j \choose k} \exp\biggl(-\frac{2\pi\lambda R_0^2 H(k,\alpha,\Gamma_{cs})}{M}\biggl)}{1+2H(k,\alpha,\Gamma_{cs})}\biggl\}^{M-1},
	\end{align}
	where $H(k,\alpha,\Gamma_{cs}) = \int_{1}^{\infty} (1-\frac{1}{(1+\Gamma/r^\alpha)^k}) r {\rm d}r$.
\end{corollary}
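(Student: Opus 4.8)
The plan is to specialize Lemma~\ref{MeanCSGivenNNDistsLemma} to the interference-limited single-slope setting by setting $W=0$ and $\mathit{l}(r)=Cr^{\alpha}$, and then to carry out the two radial computations the formula requires. First I would simplify the building block $f_j(r,M,\lambda)$ from Lemma~\ref{MeanCSGivenAllDistsLemma}. With $W=0$ the noise exponential $\exp(-Wk\Gamma_{cs}\mathit{l}(r)/(PM))$ collapses to $1$, and with $\mathit{l}(r)=Cr^{\alpha}$ the ratio $\mathit{l}(r)/\mathit{l}(\rho)$ becomes $(r/\rho)^{\alpha}$. The key observation is that the inner integral is then scale-free: substituting $\rho = r s$ in $\int_{r}^{\infty}\bigl(1-(1+\Gamma_{cs}\mathit{l}(r)/\mathit{l}(\rho))^{-k}\bigr)\rho\,{\rm d}\rho$ pulls out a factor $r^{2}$ and leaves exactly $H(k,\alpha,\Gamma_{cs})=\int_{1}^{\infty}(1-(1+\Gamma_{cs}/r^{\alpha})^{-k})r\,{\rm d}r$. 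Hence $f_j(r,M,\lambda)=\sum_{k=0}^{j}(-1)^{k}\binom{j}{k}\exp(-2\lambda\pi r^{2}H(k,\alpha,\Gamma_{cs})/M)$, a finite sum of Gaussians in $r$.

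Next I would evaluate the two ingredients of Lemma~\ref{MeanCSGivenNNDistsLemma}. Setting $r=R_0$ in the expression above immediately produces the first braced factor of the statement. For the radial integral $\int_{R_0}^{\infty} f_j(r,M,\lambda)\frac{2\lambda\pi r}{M}\exp(-\lambda\pi r^{2}/M)\,{\rm d}r$, I would interchange the finite $k$-sum with the integral (legitimate by linearity, since there are finitely many integrable summands) and compute each term by the substitution $u=r^{2}$. Each term has the form $\int_{R_0}^{\infty}\exp(-b_k r^{2})\frac{2\lambda\pi r}{M}\,{\rm d}r$ with $b_k=\lambda\pi(1+2H(k,\alpha,\Gamma_{cs}))/M$, which integrates in closed form to $\frac{1}{1+2H(k,\alpha,\Gamma_{cs})}\exp(-b_k R_0^{2})$. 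Splitting $\exp(-b_k R_0^2)=\exp(-2\lambda\pi H(k,\alpha,\Gamma_{cs})R_0^2/M)\exp(-\lambda\pi R_0^2/M)$ and factoring $\exp(-\lambda\pi R_0^{2}/M)$ out of the $k$-sum then expresses the integral as $\exp(-\lambda\pi R_0^{2}/M)$ times the quantity that becomes the second braced factor.

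Finally I would assemble the pieces. Raising the integral to the power $M-1$ produces a prefactor $\exp(-\lambda\pi(M-1)R_0^{2}/M)$, which cancels exactly against the $\exp(\lambda\pi(M-1)R_0^{2}/M)$ correction term in Lemma~\ref{MeanCSGivenNNDistsLemma}; this cancellation is precisely what leaves the clean expression claimed in the corollary. Carrying the outer $j$-series along unchanged throughout then yields the stated formula. The argument is essentially routine once Lemma~\ref{MeanCSGivenNNDistsLemma} is in hand; the only step requiring genuine care is the bookkeeping of the $R_0$-dependent exponentials — in particular verifying that the $\exp(\pm\lambda\pi(M-1)R_0^{2}/M)$ factors cancel rather than accumulate — which is the part most likely to hide a stray sign or a factor of $M$.
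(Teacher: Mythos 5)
Your proposal is correct and takes essentially the same route as the paper's proof: it specializes Lemma~\ref{MeanCSGivenNNDistsLemma} with $W=0$ and $\mathit{l}(r)=Cr^{\alpha}$, uses the scale-free substitution to reduce $f_j(r,M,\lambda)$ to a finite sum of Gaussians involving $H(k,\alpha,\Gamma_{cs})$, and evaluates the radial integral in closed form exactly as in the paper's~(\ref{CSDelayDistIntLmtedCoroPfEq2}). Your only addition is to make explicit the cancellation of the $\exp(\pm\lambda\pi(M-1)R_0^{2}/M)$ factors, which the paper leaves implicit in the final substitution step.
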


\begin{proof}
	Since $W=0$ and $\mathit{l}(r) = C r^\alpha$, $f_j(R_i,M,\lambda)$ in Lemma~\ref{MeanCSGivenNNDistsLemma} can be simplified as:
	\begin{align}\label{CSDelayDistIntLmtedCoroPfEq1}
	f_j(R_i,M,\lambda) = \sum_{k=0}^{j}(-1)^k {j \choose k} \exp\biggl(-\frac{2\pi\lambda R_0^2 H(k,\alpha,\Gamma_{cs})}{M}\biggl).
 	\end{align}
 	Therefore, we can further obtain that: 
	\begin{align}\label{CSDelayDistIntLmtedCoroPfEq2}
	\int_{R_0}^{\infty}f_j(r,M,\lambda)\frac{\lambda 2\pi r}{M} \exp(-\frac{\lambda \pi r^2}{M}) {\rm d}r =  \sum_{k=0}^{j}(-1)^k {j \choose k}\frac{\exp(-\frac{\lambda\pi}{M}(1+2H(k,\alpha,\Gamma_{cs}))R_0^2)}{1+2H(k,\alpha,\Gamma_{cs})}.
	\end{align}
	The proof can be completed by substituting~(\ref{CSDelayDistIntLmtedCoroPfEq1}) and~(\ref{CSDelayDistIntLmtedCoroPfEq2}) into Lemma~\ref{MeanCSGivenNNDistsLemma}.
\end{proof}


\section{Numerical Evaluations}\label{NumEval}
In this section, the distribution of the conditional mean cell search delay~(\ref{CondiCSDelayDefnEq}) is numerically evaluated for both the noise limited scenario and the interference limited scenario. Specifically, for the noise limited scenario, we consider a cellular network operating in the mmWave band with carrier frequency $f_c = 73$ GHz, bandwidth $B = 2$ GHz, and BS intensity $\lambda = 100$ BS/km$^2$. The path loss exponents for LOS and NLOS links are $2.1$ and $3.3$ respectively, and the critical distance is $R_c = 50$m. In addition, the OFDM symbol period is $\tau = 14.3$ $\mu$s, and the IA cycle length is chosen as $T = 20$ ms~\cite{li2016design,Verizon20165G2}. As for the interference limited scenario, we consider a cellular network with carrier frequency $f_c = 2$ GHz, BS intensity $\lambda = 100$ BS/km$^2$, and a standard single slope path loss model with path loss exponent $\alpha = 2.5$. The OFDM symbol period is $\tau = 71.4$ $\mu$s, and the IA cycle length is $T = 100$ ms.

\begin{figure}
	\begin{subfigure}[b]{0.50\textwidth}
		\centering
		\includegraphics[height=2.35in, width= 3.6in]{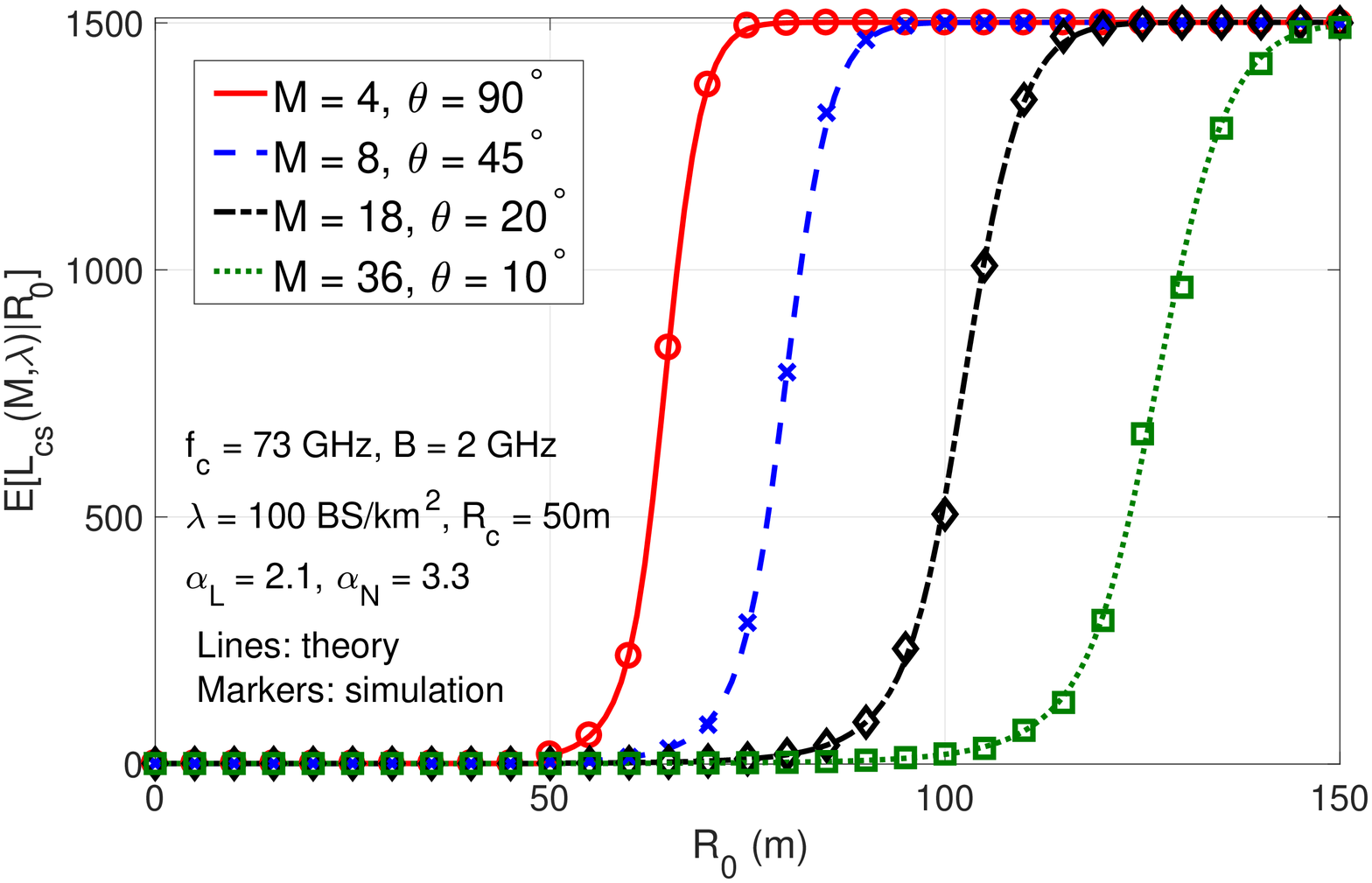}
		\caption{Noise limited networks}
		\label{NoiseLmdNtw}
	\end{subfigure}
	\hfill
	\begin{subfigure}[b]{0.51\textwidth}
		\centering
		\includegraphics[height=2.35in, width=3.6in]{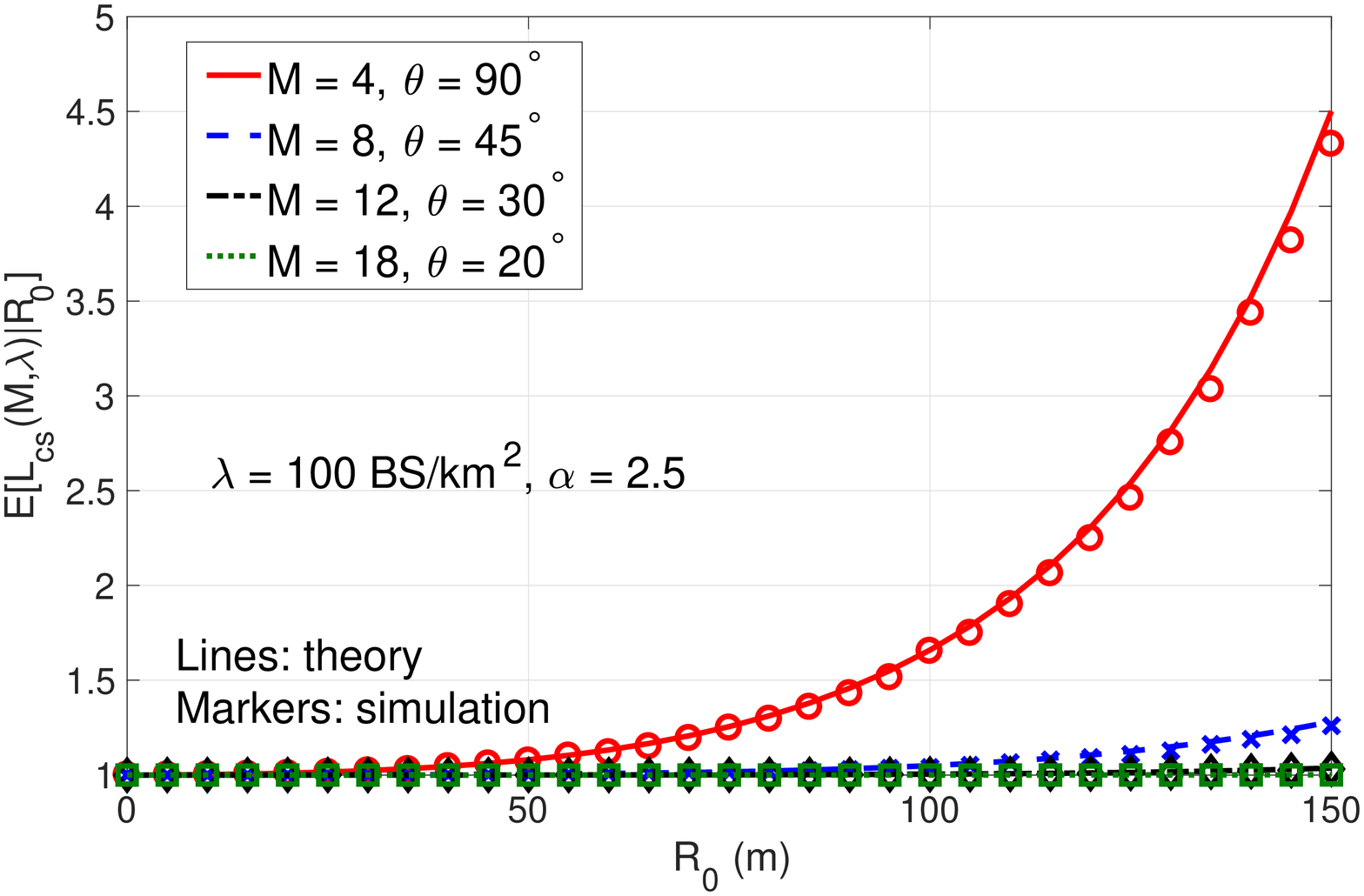}
		\caption{Interference limited networks}
		\label{IntLmdNtw}
	\end{subfigure}
	\caption{Conditional expected number of cycles to succeed in cell search.}\label{NumCycsFig}
\end{figure}
\subsection{Conditional Expected Number of Cycles to Succeed in Cell Search}
In order to evaluate the distribution of the conditional mean cell search delay, we first illustrate Lemma~\ref{MeanCSGivenNNDistsLemma}. Specifically, we have simulated the cellular network with the directional cell search protocol proposed in Section~\ref{IAProtSubsec}, given the distance from the user to its nearest BS $R_0$. As shown in Lemma~\ref{NoiseLimLemma1} and Remark~\ref{PhaseTranIntLmtedRmk}, the cell edge users will require a large number of cycles to succeed in cell search. Therefore, we  have set an upper bound for the number of cycles that a user can try cell search, which is equal to $1500$ cycles for the noise limited scenario and $100$ cycles for the interference limited scenario. Specifically, the infinite summation in Lemma~\ref{NoiseLimLemma1} is computed up to the $1500$-th ($100$-th) term, and the simulation will treat a user as in outage if it cannot be connected within $1500$ ($100$) cycles. 

Fig.~\ref{NumCycsFig} shows a close match between the analytical results and simulation results for both the noise and interference limited scenarios, which is in line with Lemma~\ref{MeanCSGivenNNDistsLemma}. In addition, we can also observe from Fig.~\ref{NumCycsFig} that the conditional expected number of cycles to succeed in cell search is monotonically decreasing as the number of BS antennas/beams $M$ increases, or as the distance to the nearest BS $R_0$ decreases.  
 
\subsection{Cell Search Delay Distribution in Noise Limited Networks}
The cell search delay distribution for noise limited networks is numerically evaluated in this part. Fig.~\ref{CSDelayDistFig} plots the CCDF of the conditional mean cell search delay $D_{cs}(R_0,M,\lambda)$, which is obtained by generating $10^6$ realizations of $R_0$ and computing the corresponding $D_{cs}(R_0,M,\lambda)$ through Corollary~\ref{CSDelayDistNoiseLmtedCoro}. We can observe from Fig.~\ref{CSDelayDistFig} that under the log-log scale, the tail distribution function of $D_{cs}(R_0,M,\lambda)$, i.e., $\mathbb{P}(D_{cs}(R_0,M,\lambda) \geq t)$, decreases almost linearly with respect to $t$. This indicates that the cell search delay is actually heavy-tailed and of the Pareto type. It can also be observed from Fig.~\ref{CSDelayDistFig} that the tail distribution function satisfies $\lim_{t \rightarrow \infty} \frac{-\log \mathbb{P}(D_{cs}(R_0,M,\lambda) \geq t)}{\log t} < 1$ for $M = 4,8,18,36$. Therefore, the expected cell search delay is always infinite, which is in line with Lemma~\ref{NoiseLimLemma1}.

Fig.~\ref{CSDelayDistFig} also shows that as the number of BS antennas $M$ increases, the tail of $D_{cs}(R_0,M,\lambda)$ becomes lighter and thus the cell search delay for edge users is significantly reduced. For example, the cell search delay for the $10^{\text{th}}$ percentile user is almost 10 times smaller when $M$ increases from $18$ to $36$. In fact, increasing $M$ will increase the SNR of cell edge users, such that the number of IA cycles required for the edge users to succeed in cell search (i.e., $L_{cs}(R_0,M,\lambda)$) can be shortened. Since $D_{cs}(R_0,M,\lambda) \triangleq T(L_{cs}(R_0,M,\lambda)-1) + M \tau$, and the IA cycle length $T$ is much larger than the OFDM symbol period $\tau$, the tail distribution of $D_{cs}(R_0,M,\lambda)$ therefore becomes lighter as $M$ increases despite having a higher beam-sweeping overhead within every IA cycle. 
\begin{figure}[h]
	\centering
	\includegraphics[width=0.55\linewidth]{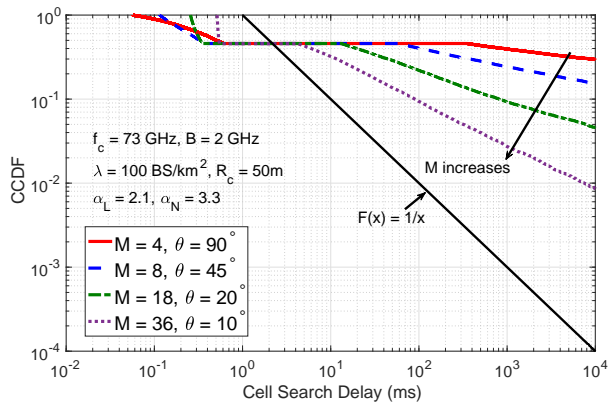}
	\caption{Cell search delay distribution for noise limited networks.}
	\label{CSDelayDistFig}
\end{figure}

Due to the heavy-tailed nature for the cell search delay distribution, Fig.~\ref{CSDelayDistFig} shows that there exists an extremely large variation of the cell search delay performance from cell center users to cell edge users. Fig.~\ref{5thDelayFigNoiseLmted} plots the cell search delay for the $95^{\text{th}}$ percentile users, as the number of BS antennas $M$ increases. Since the $95^{\text{th}}$ percentile users are located at the cell center, they are typically LOS to their serving BSs with sufficiently high isotropic SNR, and thus they can succeed cell search in the first cycle that they initiates IA. Therefore, Fig.~\ref{5thDelayFigNoiseLmted} shows that as $M$ increases, the cell search delay for the $95^{\text{th}}$ percentile users increases almost linearly due to the increase of the beam-sweeping overhead.
\begin{figure}[h]
	\centering
	\includegraphics[width=0.55\linewidth]{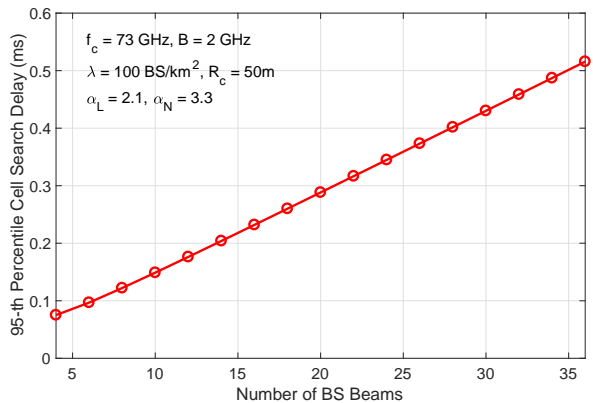}
	\caption{$95^{\text{th}}$ percentile cell search delay for noise limited network.}
	\label{5thDelayFigNoiseLmted}
\end{figure}

The cell search delay performance for the $50^{\text{th}}$ percentile users, or the median users, is plotted in Fig.~\ref{50thDelayFigNoiseLmted}. We can observe that in contrast to the mean cell search delay which is infinite, the median delay is less than 1 ms for various BS antenna number $M$. When $M$ is small, median users do not have high enough SNR and thus they will need more than $1$ IA cycles to succeed in cell search. As $M$ increases, the cell search delay for median users first decreases due to the improved SNR and cell search success probability, until the median users could succeed cell search in the first cycle that they initiates IA. Then the cell search delay will increase as $M$ is further increased, which is because the beam sweeping overhead becomes more dominant. The optimal BS antenna number $M$ is $12$ (or $30^\circ$ beamwidth) in Fig.~\ref{50thDelayFigNoiseLmted}, which corresponds to a cell search delay of $0.31$ ms. 

\begin{figure}[h]
	\centering
	\includegraphics[width=0.55\linewidth]{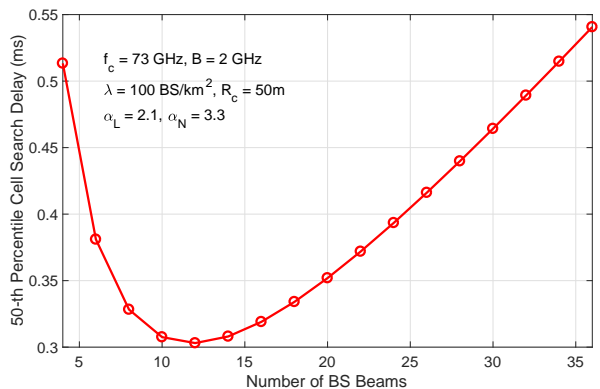}
	\caption{$50^{\text{th}}$ percentile cell search delay for noise limited network.}
	\label{50thDelayFigNoiseLmted}
\end{figure}

\subsection{Cell Search Delay Distribution in Interference Limited Networks}
Similar to the noise limited scenario, we have evaluated the CCDF of cell search delay for the interference limited scenario in Fig.~\ref{CSDelayDistFigIntLmtd} by generating $10^6$ realizations of $R_0$ and computing the corresponding $D_{cs}(R_0,M,\lambda)$ through Corollary~\ref{CSDelayDistIntLmtedCoro}. 

Fig.~\ref{CSDelayDistFigIntLmtd} shows that the tail distribution function of $D_{cs}(R_0,M,\lambda)$ decreases almost linearly under the log-log scale, which means the distribution of $D_{cs}(R_0,M,\lambda)$ is also heavy-tailed under the interference limited scenario. However, in contrast to the noise limited scenario where the overall mean cell search delay is always infinite, the phase transition for mean cell search delay of the interference limited scenario can be observed from Fig.~\ref{CSDelayDistFigIntLmtd}. Specifically, when the cell search is performed omni-directionally (i.e., $M=1$), Fig.~\ref{CSDelayDistFigIntLmtd} shows that the decay rate of the tail satisfies $\lim_{t \rightarrow \infty} \frac{-\log \mathbb{P}(D_{cs}(R_0,M,\lambda) \geq t)}{\log t} < 1$, which indicates an infinite mean cell search delay. As $M$ increases to $4,8, 12$, Fig.~\ref{CSDelayDistFigIntLmtd} shows that $\lim_{t \rightarrow \infty} \frac{-\log \mathbb{P}(D_{cs}(R_0,M,\lambda) \geq t)}{\log t} > 1$, which leads to a finite mean cell search delay. This observation is consistent with Theorem~\ref{PhaseTranIntLmtedRmk}, which shows that for the considered interference limited scenario with path loss exponent $\alpha = 2.5$ and SINR detection threshold $\Gamma_{cs} = -4$ dB, the mean cell search delay is infinite when $M = 1$, and finite as long as $M > 1.59$. 

It can also be observed from Fig.~\ref{CSDelayDistFigIntLmtd} that BS beam-sweeping can significantly reduce the cell search delay for both the median users and edge users in the interference limited networks. For example, when the number of BS antennas/beams $M$ is $1$, $4$, $8$, and $12$, the corresponding cell search delay for the $50^{\text{th}}$ percentile user is $200$ ms, $8.98$ ms, $1.18$ ms, and $0.9123$ ms respectively, while the corresponding cell search delay for the $10^{\text{th}}$ percentile user is $3720$ ms, $53.84$ ms, $5.14$ ms and $1.35$ ms respectively. The main reason for such a performance gain in the interference-limited network is that as $M$ increases, beam-sweeping creates more angular separations from the nearby BSs to the user, so that the number of IA cycles to succeed in cell search can be effectively reduced, especially for edge users.

\begin{figure}[h]
	\centering
	\includegraphics[width=0.55\linewidth]{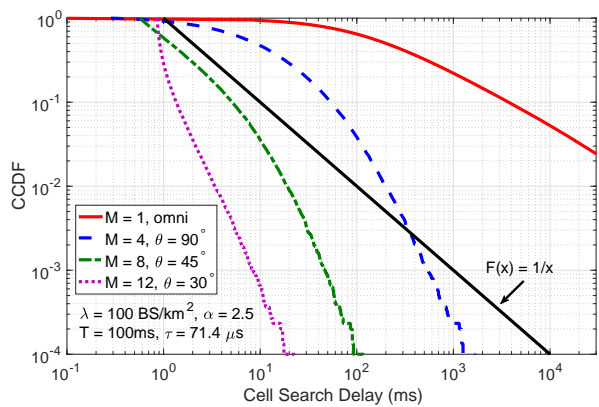}
	\caption{Cell search delay distribution for interference limited network.}
	\label{CSDelayDistFigIntLmtd}
\end{figure}

\section{Conclusions}
This paper has proposed a mathematical framework to analyze the directional cell search delay for fixed cellular networks, where the BS and user locations are static. 
Conditioned on the BS locations, we have first derived the conditional expected cell search delay under the Palm distribution of the user process. By utilizing a Taylor series expansion, we have further derived the exact expression for the overall mean cell search delay in a Poisson cellular network with Rayleigh fading channels. Based on this expression, the expected cell search delay in noise-limited network was proved to be infinite when the NLOS path loss exponent is larger than 2. By contrast, a phase transition for the expected cell search delay in the interference-limited network was identified: the delay is finite when the number of BS beams/antennas is greater than a threshold, and infinite otherwise. Finally, by investigating the distribution of the conditional cell search delay given the distance to the nearest BS, the cell search delay for the edge user was shown to be significantly reduced as the number of BS beams/antennas increases, which holds true for both the noise and interference limited networks.

The framework developed in this paper provides a tractable approach to handle the spatial and temporal correlations of
user's SINR process in cellular networks with fixed BS and user locations. Future work will leverage the proposed framework to derive the random access phase performance, the overall expected initial access delay, as well as the downlink throughput performance for such fixed cellular networks. In addition, we will also extend the framework to incorporate user beamforming or power control. 

\section*{Acknowledgments}
This work is supported in part by the National Science Foundation under Grant No. NSF-CCF-1218338 and an award from the Simons Foundation (\#197982), both to the University of Texas at Austin.

\bibliographystyle{ieeetr}
\bibliography{reference}

\end{document}